\documentclass{article}

\title{A note on Harris' ergodic theorem, controllability and perturbations of harmonic networks}
\author{Renaud Raqu{\'e}pas}
\date{}

\usepackage[T1]{fontenc}
\usepackage{amsmath,amsfonts,amssymb}
\usepackage[margin=1.25in]{geometry}
\usepackage{pstool}
\usepackage{braket}
\usepackage{amsthm}
\usepackage{tikz}
\usepackage{xcolor}
\usepackage[colorlinks = true, linkcolor = blue, urlcolor  = teal, citecolor = violet]{hyperref}
\frenchspacing

\newtheorem{lemma}{Lemma}[section]
\newtheorem{theorem}[lemma]{Theorem}
\newtheorem{proposition}[lemma]{Proposition}

\theoremstyle{definition}

\newtheorem{remark}[lemma]{Remark}

\newtheorem{example}[lemma]{Example}


\newcommand{\numset}[1]{\mathbf{#1}}
	\newcommand{\cc}{\numset{C}}
	\newcommand{\rr}{\numset{R}}
	
	\newcommand{\zz}{\numset{Z}}
	\newcommand{\nn}{\numset{N}}

	\newcommand{\one}{\mathbf{1}}
	\newcommand{\e}{\mathrm{e}}
		\newcommand{\Exp}[1]{\mathrm{e}^{#1}}
	

	\makeatletter
	\providecommand*{\diff}%
		{\@ifnextchar^{\DIfF}{\DIfF^{}}}
	\def\DIfF^#1{%
		\mathop{\mathrm{\mathstrut d}}%
			\nolimits^{#1}\gobblespace}
	\def\gobblespace{%
		\futurelet\diffarg\opspace}
	\def\opspace{%
		\let\DiffSpace\!%
		\ifx\diffarg(%
			\let\DiffSpace\relax
		\else
			\ifx\diffarg[%
				\let\DiffSpace\relax
			\else
				\ifx\diffarg\{%
					\let\DiffSpace\relax
				\fi\fi\fi\DiffSpace}

	\providecommand*{\od}[3][]{%
		\frac{\diff^{#1}#2}{\diff #3^{#1}}}

	\renewcommand{\d}{\diff}

	\renewcommand{\L}{\mathcal{L}}

	\newcommand{\pp}{\mathbf{P}}
	\newcommand{\ee}{\mathbf{E}}


	

	
	\DeclareMathOperator{\tr}{tr}

\newcommand{\invar}{^\textnormal{inv}}

\newcommand{\power}{a}
\newcommand{\xinit}{x^\textnormal{in}}
\newcommand{\scball}{G}
\newcommand{\scballti}{D}

\begin{document}

\maketitle

\begin{center}
	\begin{tabular}{c c c c}
		Univ. Grenoble Alpes & &&  McGill University\\
		CNRS, Institut Fourier & && Dept. of Mathematics and Statistics\\
		F-38\,000 Grenoble  & && 1005--805 rue Sherbrooke O. \\
		France & &&  Montr\'eal (Qu\'ebec){\,\ }H3A 0B9 \ \ Canada \\
	\end{tabular}
\end{center}

\medskip

\begin{abstract}
	We show that elements of control theory, together with an application of Harris' ergodic theorem, provide an alternate method for showing exponential convergence to a unique stationary measure for certain classes of networks of quasi-harmonic classical oscillators coupled to heat baths. With the system of oscillators expressed in the form $$\d X_t = A X_t \d t + F(X_t) \d t + B \d W_t $$ in~$\rr^d$, where~$A$ encodes the harmonic part of the force and~$-F$ corresponds to the gradient of the anharmonic part of the potential, the hypotheses under which we obtain exponential mixing are the following: $A$ is dissipative, the pair $(A,B)$ satisfies the Kalman condition, $F$ grows sufficiently slowly at infinity (depending on the dimension~$d$), and the vector fields in the equation of motion satisfy the weak H\"ormander condition in at least one point of the phase space.
\end{abstract}

\section{Introduction}

Thermally driven networks of oscillators play an important role in the investigation of various aspects of nonequilibrium statistical mechanics. On a mathematical level, a driven network of classical harmonic oscillators can be modeled as a $d$-dimensional process~$(X_t)_{t \geq 0}$ described by a linear stochastic differential equation (\textsc{sde}) of the form
\begin{align*}
	\d X_t &= A X_t \d t + B \d Z_t,
\end{align*}
where the linear operators~$A$ and~$B$ satisfy certain structural conditions and where $(Z_t)_{t \geq 0}$ is a given $n$-dimensional stochastic process describing the noise due to thermal fluctuations. The integer~$n\leq d$ is the number of degrees of freedom of the network that are coupled to heat baths. The noise is often taken to be a Wiener process, but other types of noise are physically interesting. A particularly important question regarding such systems and perturbations thereof is that of invariant measures.

In this work, we consider~$A$ and~$B$ satisfying the \emph{Kalman condition}, a smooth globally Lipschitz perturbing vector field~$x \mapsto F(x)$ that grows slower than $|x|^{1/2d}$ at infinity\footnote{The power $\tfrac{1}{2d}$ is generically not optimal. As we will see, $d$ can be replaced by an integer $d_*$ appearing in the formulation of the Kalman rank condition. In all cases $d_* \leq d$.} and~$(W_t)_{t \geq 0}$ a Wiener process, and show with arguments from control theory and an application of Hairer and Mattingly's version of Harris' ergodic theorem that the process described by the \textsc{sde}
$$
	\d X_t = A X_t \d t + F(X_t) \d t + B \d W_t
$$
admits a unique stationary measure
when~$A$ is \emph{dissipitaive}
and a \emph{weak H\"ormander condition} on the vector fields in the \textsc{sde} holds in at least one point~$x_0$ of the phase space. Moreover, the convergence to this stationary measure then happens exponentially fast.
The abstract mathematical setup and the result are made more precise in Section~\ref{sec:setup}. The proof is provided in Section~\ref{sec:ab-res}.

\medskip

In Section~\ref{sec:net}, we introduce the mathematical description of  perturbed  networks of harmonic oscillators in this framework, both in the Langevin regime and in the so-called semi-Markovian regime, and for geometries that go beyond the 1-dimensional chain. In this context, the matrix~$A$ encodes the friction, kinetic and harmonic terms (both the pinning and the interaction) while the perturbation~$F$ corresponds to minus the gradient of the anharmonic part of the potential.

\medskip

In the case of a 1-dimensional chain of oscillators connected to heat baths at both ends, results of this type have been established for a very general class of quasi-homogeneous potentials~\cite{EPR99a,EPR99b,EH00,RBT02,Ca07}. The recent paper~\cite{CEHRB} extends these results to more complicated networks. Roughly speaking, these results require that the pinning potential grows as $|q|^{k_1}$ at infinity, that the interaction potential grows as $|q|^{k_2}$ with $k_2 \geq k_1 \geq 2$, and that the interaction part of the potential has no flat piece or infinitely degenerate points. While our growth condition is considerably more restrictive than the ones found in these works, the form of local nondegeneracy that we require is weaker: we only need a weak H\"ormander condition to hold at a single point. Moreover, our setup accommodates a wide variety of geometries and bounded many-body interaction terms (beyond pinning and two-body interactions).

Such results typically involve carefully studying smoothing properties of the associated Markov semigroup. The strategy here is different and instead relies on recent developments on the use of solid controllability in the study of mixing properties of random dynamical systems~\cite{AS05,AKSS07,Sh07,Sh17}. The simplicity of the argument can in itself justify the presentation of such an application.

Another advantage is that our general strategy is not based on the Gaussian structure of Brownian motion and can thus be more easily adapted to different types of noise that are physically relevant. Similar arguments can be used to discuss the analogous problem with compound Poisson processes; this type of problem will be analyzed in a subsequent work.

The proof can be summarized as follows. For a discrete-time Markov process, Harris' theorem states that the existence and uniqueness of an invariant measure, with exponentially fast convergence in the total variation metric, can be obtained from the existence of a suitable Lyapunov function and a minorization for the transition probabilities starting from any point in the interior of a suitable level set of that Lyapunov function. The precise statement we use is the one formulated in~\cite{HM11}; also see~\cite{Ha56} and~\cite{MeTw}. We then pass from discrete to continuous time.

The function $V(x) := \int_0^\infty |\Exp{s A} x|^2 \d s$ is shown to be a suitable Lyapunov function using dissipativity of~$A$, the behaviour of~$F$ at infinity, and basic It\^{o} calculus. The details are given in Section~\ref{sec:Lyap}.

In order to prove the lower bound on transitions, we use the Kalman condition on the pair~$(A,B)$ and again the estimate on the behaviour of~$F$ at infinity. These hypotheses yield that the point~$x_0$ in which the weak H\"ormander condition holds can be approached from $\{ V \leq R\}$ with a uniform lower bound on the probability. On the other hand, the weak H\"ormander condition in~$x_0$ implies solid controllability from~$x_0$ and we can combine solid controllability and approachability to obtain the desired lower bound. The details are given in Section~\ref{sec:contr}.

\medskip

Different sufficient conditions for the hypotheses of the main theorem to hold are given in more concrete terms throughout Sections~\ref{sec:net} and~\ref{sec:Hor}. In the former, we give criteria for the dissipativity, Kalman and growth conditions in terms of more physical quantities for networks of oscillators based on~\cite{JPS17}. In the latter, we give a perturbative condition for the weak H\"ormander condition to hold.

\paragraph*{Acknowledgements} The author would like to thank Armen Shirikyan for introduction to these questions and crucial suggestions for this particular application, No\'e Cuneo and Vojkan Jak\v{s}i\'c for informative discussions, as well as the D\'epartement de math\'ematiques at Universit\'e Cergy--Pontoise, where part of this research was conducted, for its hospitality. The research of the author was funded by the Natural Sciences and Engineering Research Council of Canada (NSERC), the Fonds de recherche du Qu\'ebec\,--\,Nature et technologies (FRQNT) and the NonStops project of the Agence nationale de la recherche  (ANR-17-CE40-0006-02).

\section{Setup, assumptions and main result}\label{sec:setup}

\paragraph{Notation}
Throughout the paper, we use:
	$\|{\,\cdot\,}\|$ to denote the operator norm of linear maps;
~$\{e_i\}_{i=1}^n$ for the standard orthonormal basis of~$\rr^n$;
    $|{\,\cdot\,}|$ to denote the euclidean norm on $\rr^d$ (arising from the standard inner product~$\braket{{\,\cdot\,},{\cdot\,}}$);
  	 $B(x,r)$ for the open ball of radius $r > 0$ centered at the point~$x$ in~$\rr^d$;
	 $C^k_0([0,T];\rr^n)$ to denote the space of~$k$~times continuously differentiable functions~$\eta : [0,T] \to \rr^n$ with $\eta(0) = 0$;
	 $\operatorname{Prob}(\rr^d)$ for the space of Borel probability measures on~$\rr^d$;
	 $\mathcal{L}_G$ for the Lie derivative with respect to the vector field~$G$;
	 $\one_S$ to denote the indicator function of the set~$S$.
	 The natural numbers~$\nn$ start at~1. The underlying probability space is $(\Omega,\mathcal{F},\pp)$ and we use the letter~$\omega$ to denote elementary events there.

\medskip

Let $d$ and $n$ be natural numbers with $n \leq d$ and let $\omega \mapsto (W_t(\omega))_{t \geq 0}$ be a Wiener process in~$\rr^n$. We are interested in the $d$-dimensional diffusion process $\omega \mapsto (X_t(\xinit,\omega))_{t \geq 0 }$ governed by the equation
\begin{align}
	X_t(\xinit,\omega) = \xinit + \int_0^t A X_s(\xinit,\omega) + F(X_s(\xinit,\omega))\d s + B W_t(\omega) \label{eq:SDE}
\end{align}
where~$B: \rr^n \to\rr^d$ is a linear map, $A: \rr^d \to \rr^d$ is a linear map,~$F$ is a smooth globally Lipschitz vector field on~$\rr^d$, and~$\xinit\in\rr^d$ is an initial condition. We often omit writing explicitly the dependence on~$\xinit$ or~$\omega$ and write the equation in differential notation.
We assume the following dissipativity and controllability conditions on the linear maps~$A$ and~$B$.
\begin{enumerate}
	\item[\textbf{(D)}] the  eigenvalues of the linear map~$A$ (considered over~$\cc^d$) each have strictly negative real part.
	\item[\textbf{(K)}] the pair $(A,B)$ satisfies the \emph{Kalman condition}, meaning that the columns of $B$, $AB$, $A^2 B$, $A^3B$ and so forth span~$\rr^d$.

	Then, by the Cayley--Hamilton theorem, there exists $d_* \leq d$ such that $$\operatorname{span}\{Be_i, ABe_i, A^2 Be_i, \dotsc, A^{d_*-1}Be_i : i=1,\dotsc,n\} = \rr^d.$$
\end{enumerate}
The Kalman condition is commonly used in the basic theory of controllability for linear systems ({i.e.} when $F\equiv 0$); it is then equivalent to several notions of controllability~\cite[\S\S{1.2--1.3}]{Cor}.

We further assume that the perturbing vector field~$F$ satisfies the following growth condition.
\begin{enumerate}
	\item[\textbf{(G)}] there exists a constant~$\power \in [0,\tfrac{1}{2d_*})$ such that
	\begin{equation}
      \sup_{x \in \rr^d} \frac{|F(x)|}{(1+|x|)^\power} < \infty. \label{eq:F-growth}
	\end{equation}
\end{enumerate}
Finally, we suppose the existence of a point~$x_0$ where the \emph{weak H\"ormander condition} on the vector fields appearing in the stochastic equation~\eqref{eq:SDE} is satisfied.
\begin{enumerate}
  \item[\textbf{(H)}] there exists a point $x_0 \in \rr^d$ in which the family
	$$
		\{
			V_0, \mathcal{L}_{V_2} V_1, \mathcal{L}_{V_3} \mathcal{L}_{V_2} V_1, \dotsc : V_0 \in \mathcal{B} \text{ and } V_1, V_2, V_3, \dotsc \in \mathcal{B} \cup \{A + F\}
		\}
	$$
  of vector fields spans~$T_{x_0} \rr^d \cong \rr^d$, where $\mathcal{B} = \{B e_1, \dotsb, B e_n\}$.
\end{enumerate}

\begin{remark}
	In the linear case ({i.e.} when $F \equiv 0$), a straightforward computation shows that the Kalman condition~\textnormal{(K)} implies the weak H\"ormander condition~\textnormal{(H)}. This suggests that the latter can be obtained from a perturbative argument in a point~$x_0$ far from the origin if~$F$ can be neglected at infinity in a suitable sense; see Section~\ref{sec:Hor}.
\end{remark}

It is convenient to study the properties of such a diffusion process through the corresponding controlled equation
\begin{align}
   \begin{cases}\label{eq:pert-sys}
	\quad \dot x(t)  = A x(t) + F(x(t)) + B \dot\eta (t), \\
	\quad x(0) = \xinit,
   \end{cases}
\end{align}
understood as
\begin{align*}
   x(t) = \xinit + \int_0^t A x(s) + F(x(s)) \d s + B (\eta(t)-\eta(0))
\end{align*}
when $\eta$ is a merely continuous function. We define, for $0 \leq t \leq T$,
\begin{align*}
	S^F_t : \rr^d \times C_0([0,T]; \rr^n) &\to \rr^d \\
		(\xinit, \eta) &\mapsto x(t)
\end{align*}
giving the solution at time~$t$ of this problem. We refer to the second argument as the \emph{control}. The function $S_t^F$ is uniformly continuous in each argument. It is also Fr\'echet differentiable. We will make use of these regularity properties in Section~\ref{sec:contr}.

\begin{remark}\label{rem:dec}
	The law for~$\eta \in C_0([0,T];\rr^n)$ corresponding to the Wiener process~$W_t(\omega)$ restricted to the interval~$[0,T]$ in~\eqref{eq:SDE}, which we denote by~$\ell$, is \emph{decomposable} in the following sense.

	There exist a sequence~$(F_N)_{N \in \nn}$ of nested finite-dimensional subspaces and a sequence $(F'_N)_{N \in \nn}$ of closed subspaces of the Banach space~$C_0([0,T];\rr^n)$ such that
	\begin{itemize}
		\item[(i)] the union~$\bigcup_{N\in\nn} F_N$ is dense in~$C_0([0,T];\rr^n)$;
		\item[(ii)] the space $C_0([0,T];\rr^n)$ decomposes as the direct sum $F_N \oplus F'_N$ for each $N \in \nn$, with corresponding (bounded) projections~$\Pi_N$ and~$\Pi'_N$, and the measure~$\ell$ decomposes as the product $\ell_N \otimes \ell'_N$ of its projected measures;
		\item[(iii)] the projected measure $\ell_N$ possesses a smooth positive density~$\rho_N$ with respect to the Lebesgue measure on the finite-dimensional space~$F_N$.
	\end{itemize}
	The requirement of~\cite{Sh17} that $\Pi_N \zeta \to \zeta$ in norm does not hold for all controls~$\zeta \in C_0([0,T];\rr^n)$. However, the convergence will hold true on nice enough subsets\,---\,which suffices for our endeavour. These decomposability properties play a central role in the arguments of~\cite{Sh07,Sh17} and are discussed here in Appendix~\ref{sec:app}.
\end{remark}

We use $P^F_t(\xinit,\cdot\,)$ to denote the distribution of the random variable~$\omega \mapsto X_t(\xinit,\omega)$ defined by~\eqref{eq:SDE}.
Then, $P^F_t$ satisfies the Chapman--Kolmogorov equation:
$$
   P^F_T(x,\Gamma) =\int_{\rr^d} P^F_{T-t}(y,\Gamma) P^F_t(x,\d y)
$$
for all times $0 \leq t \leq T$, all $x\in\rr^d$ and all Borel sets~$\Gamma \subseteq \rr^d$. We are interested in the large-time behaviour of~$P^F_t$. Our main result is the following.

\begin{theorem}\label{thm:main-new}
	Suppose that the \textsc{sde}
	\begin{align*}
		\d X_t  &= A X_t \d t + F(X_t)\d t + B \d W_t \label{eq:SDE}
	\end{align*}
	satisfies the conditions~\textnormal{(D)},~\textnormal{(K)}, \textnormal{(G)} and~\textnormal{(H)}.
	Then,
	it admits a unique invariant measure~$\mu\invar \in \operatorname{Prob}(\rr^d)$.
	Moreover, the function $V : \rr^d \to [0,\infty)$ defined by
	$$
		x \mapsto \int_{0}^\infty \braket{\Exp{s A} x, \Exp{s A} x} \d s
	$$
	is integrable with respect to~$\mu\invar$ and there exist constants $c,C > 0$ such that
	\begin{equation}\label{eq:exp-conv}
		\Big| \int_{\rr^d} f(y) P^F_t(\xinit,\d y) - \int_{\rr^d} f(y) \mu\invar(\d y) \Big|
			\leq C(1 + V(\xinit)) \Exp{-ct}
	\end{equation}
	for all $\xinit \in \rr^d$, all $t \geq 0$ and all measurable functions~$f$ with $|f| \leq 1 + V$.

\end{theorem}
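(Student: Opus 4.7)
My plan is to apply the Hairer--Mattingly formulation of Harris' ergodic theorem to the discrete-time skeleton $(X_{kT})_{k \in \nn}$ obtained by sampling at a fixed step $T > 0$, then extend existence, uniqueness and exponential convergence to continuous time via the Chapman--Kolmogorov property. Two ingredients are needed: a Lyapunov function for the one-step transition kernel and a minorization of $P^F_T(x,\cdot\,)$ uniform over a suitable sublevel set of that Lyapunov function.

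For the Lyapunov part, I would show that $V(x) = \int_0^\infty |\Exp{sA} x|^2 \d s$ has the required properties. Dissipativity~(D) makes the integral converge and gives $c_1 |x|^2 \leq V(x) \leq c_2 |x|^2$. A direct differentiation using $\partial_s |\Exp{sA}x|^2 = 2\braket{A\Exp{sA}x,\Exp{sA}x}$ yields $\braket{\nabla V(x),Ax} = -|x|^2$. Applying It\^o's formula to $V(X_t)$ then produces a decomposition in which this negative-definite drift is offset by a martingale with bounded quadratic variation and a cross term involving $\braket{\nabla V(X_s), F(X_s)}$. Because the growth exponent in~(G) satisfies $\power < \tfrac{1}{2d_*} \leq \tfrac12$, the cross term is dominated by the dissipative contribution outside a bounded set, yielding an exponential drift condition of the form $\ee[V(X_T(\xinit,\,\cdot\,))] \leq \lambda V(\xinit) + K$ with $\lambda \in (0,1)$.

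The more delicate step is the minorization. Following the author's outline, I would fix $R$ large enough for $\{V \leq R\}$ to play the role of the small set in Harris' theorem, and split the transition time as $[0,T] = [0,T/2] \cup [T/2,T]$. On the first half, Kalman's condition~(K) for the linear part produces, for each $y$ in a neighborhood of $\{V \leq R\}$, a control of uniformly bounded norm whose \emph{linear} trajectory ends near $x_0$; the growth condition~(G) makes~$F$ a bounded Lipschitz perturbation over trajectories of bounded length, so the perturbed solution map~$S^F_{T/2}$ sends an open neighborhood of that control into a ball $B(x_0,r)$, which carries positive Wiener mass by the decomposability recalled in Remark~\ref{rem:dec}. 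On the second half, the weak H\"ormander condition~(H) at~$x_0$ combined with decomposability implies, in the sense of Agrachev--Sarychev--Shirikyan, solid controllability of~$S^F_{T/2}$ at~$x_0$, which should transfer to a local minorization $P^F_{T/2}(z,\,\cdot\,) \geq \varepsilon \nu(\,\cdot\,)$ uniformly for $z \in B(x_0,r)$ and some probability measure $\nu$. Composing via Chapman--Kolmogorov gives the desired uniform minorization on $\{V \leq R\}$.

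The hardest ingredient is this last paragraph: passing from the pointwise differential nondegeneracy~(H) to a genuine density-type lower bound on $P^F_{T/2}$ in a neighborhood of $x_0$, since the approximation property highlighted in Remark~\ref{rem:dec} only holds on nice subsets of the control space rather than on the whole of $C_0([0,T/2];\rr^n)$. Once the Lyapunov estimate and the minorization are in place, the Hairer--Mattingly theorem delivers a unique invariant measure $\mu_T\invar$ for the discrete chain at step $T$, together with integrability of $V$ and geometric contraction in the $(1+V)$-weighted total variation norm. To conclude in continuous time, I would average $\mu_T\invar$ against $P^F_t$ over $t \in [0,T]$ to produce an invariant measure for the full semigroup (necessarily equal to $\mu_T\invar$ by uniqueness) and obtain~\eqref{eq:exp-conv} at a general $t = kT + s$ by combining the discrete-time bound with one further use of the Lyapunov inequality to absorb the fractional part~$s$.
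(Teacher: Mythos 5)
Your overall architecture matches the paper's: a Lyapunov estimate for $V(x)=\int_0^\infty|\Exp{sA}x|^2\d s$ via It\^o and Gr\"onwall, a minorization built from approachability of $x_0$ plus solid controllability and decomposability near $x_0$, an invocation of Hairer--Mattingly's Harris theorem on a fixed skeleton, and a Chapman--Kolmogorov argument to return to continuous time. You also correctly flag the density-type minorization near $x_0$ as the technically heaviest ingredient.

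The genuine gap is in your approachability step. You assert that because~$F$ is ``a bounded Lipschitz perturbation over trajectories of bounded length,'' the perturbed solution map $S^F_{T/2}$ carries the Kalman-derived control into a ball $B(x_0,r)$. But~$F$ is not bounded (only $O(|x|^\power)$), and more importantly this argument gives no control on the size of~$r$: for a \emph{fixed} time $T/2$, the nonlinear deviation from the linear endpoint is a fixed constant, which need not be smaller than the radius~$\delta_0$ in which the second-half minorization holds. The only way to shrink~$r$ below~$\delta_0$ is to take the approach time short, and there the Kalman control blows up like $\|Q_T^{-1}\|=O(T^{1-2d_*})$, so the intermediate linear trajectory grows as $T\to0$. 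The whole point of the precise exponent bound $\power<\tfrac{1}{2d_*}$ in~(G) is that the accumulated nonlinear error $\int_0^T|F(z_T(s))|\d s$ still vanishes as $T\to0$ despite this blow-up; the paper works this out via the Gramian bound of Lemma~\ref{lem:norm-Qinv} and a Gr\"onwall estimate on $y_T=S^F_t-S_t$, and then handles general~$T$ by running the free flow and applying the short-time construction at the end. Uniformity over $\{V\le R\}$ then comes from the joint positivity and lower semicontinuity of $(x,T)\mapsto P^F_T(x,B(x_0,\delta))$ together with compactness of the sublevel set, not from a uniform-norm bound on controls alone. Without this scaling step your proposal does not explain why~(G) must be quantitatively tied to~$d_*$, and the minorization does not close.

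Your remaining steps are essentially the paper's: solid controllability from~$x_0$ via a Krener-type construction under~(H), a local density bound via Sard's theorem, the implicit function theorem and the decomposability of the Wiener law (with the caveat you correctly note, that $\Pi_N\to\mathrm{id}$ only on sets of controls with a common Lipschitz bound, as in Lemma~\ref{lem:not-quite-bounded}), and the Harris step. Your proposal to pass to continuous time by averaging $\mu_T\invar$ over one period is a valid alternative to the paper's direct argument that $\mu\invar P^F_s=\mu\invar$, obtained by plugging $\lambda=\mu\invar P^F_s$ into the geometric bound and letting the skeleton index tend to infinity.
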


The proof of this theorem is developed throughout Section~\ref{sec:ab-res}. The last key step there is an application of Hairer and Mattingly's version of Harris' ergodic theorem~\cite{HM11}. It requires two hypotheses: the existence of constants $\gamma \in (0,1)$ and $K > 0$ such that
\begin{equation}\label{eq:fsh-Lyap}
	\Big| \int_{\rr^d} V(y) P^F_t(x,\d y) \Big| \leq \gamma^t V(x) + K
\end{equation}
for all~$x \in\rr^d$ and all $t\geq 0$, and the existence of a positive measure bounding from below the probability of reaching a set when starting from the interior of a suitable level set of~$V$:
\begin{equation}\label{eq:fsh-nu-below}
	P^F_T(x,{\cdot\,}) \geq \nu_T
\end{equation}
for all $x \in \rr^d$ such that $V(x) \leq 1 + {2K}({1 - \gamma})^{-1}$.
The first one is dealt with in Section~\ref{sec:Lyap}; the second one, in Section~\ref{sec:contr}.

\section{Proof of Theorem~\ref{thm:main-new}}\label{sec:ab-res}

\subsection{Dissipativity and Lyapunov stability}\label{sec:Lyap}
	Condition~\textnormal{(D)} ensures that the integral defining $V :
			x \mapsto \int_{0}^\infty |\Exp{s A} x|^2 \d s
	$
	converges. To this function $V$ is naturally associated a positive definite matrix~$M$ such that $V(x) = \braket{x, M x}$.  We wish to show that, under the conditions~\textnormal{(D)} and~\textnormal{(G)}, this function satisfies the inequality~\eqref{eq:fsh-Lyap}
	for some constants $\gamma \in (0,1)$ and $K > 0$ that do not depend on~$x$.

	\begin{lemma}\label{cor:lyap}
			Under the conditions~\textnormal{(D)} and~\textnormal{(G)}, there exist constants~$K > 0$ and~$\gamma \in (0,1)$ such that the function~$V$ satisfies
			\begin{equation*}
				\Big|\int_{\rr^d} V(y) P^F_t(x,\d y)\Big| \leq \gamma^t V(x) + K
			\end{equation*}
			for all~$x \in \rr^d$ and all $t \geq 0$.
	\end{lemma}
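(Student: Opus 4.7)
The plan is to compute $dV(X_t)$ by It\^o's formula, exploit the Lyapunov equation satisfied by $M$, absorb the perturbation $F$ using the growth condition~\textnormal{(G)}, and then close the estimate via Gronwall.

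The key algebraic fact is that the positive definite matrix $M$ associated to $V$ satisfies the continuous Lyapunov equation $A^\top M + M A = -I$, as one sees by differentiating $s \mapsto \Exp{s A^\top}\Exp{s A}$ inside the integral and using condition~\textnormal{(D)} at $s = \infty$. Since $V$ is smooth with $\nabla V(x) = 2 M x$ and $\nabla^2 V = 2M$, It\^o's formula applied to $V(X_t)$ gives
\begin{align*}
\d V(X_t) = -|X_t|^2 \d t + 2 \braket{M X_t, F(X_t)} \d t + \tr(B^\top M B) \d t + 2 \braket{M X_t, B \d W_t},
\end{align*}
where the $-|X_t|^2$ term comes from $2\braket{M X_t, A X_t} = \braket{X_t, (A^\top M + MA) X_t}$.

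Next I would control the $F$-term. Condition~\textnormal{(G)} yields $|F(x)| \leq C_F(1+|x|)^\power$ with $\power < \tfrac{1}{2d_*} \leq \tfrac12 < 1$, so Cauchy--Schwarz and Young's inequality give, for any $\varepsilon > 0$, a constant $C_\varepsilon$ such that $2|\braket{M X_t, F(X_t)}| \leq \varepsilon |X_t|^2 + C_\varepsilon$. Choosing $\varepsilon = \tfrac12$ and using that $|x|^2 \geq V(x)/\|M\|$ (since $V(x) = \braket{x, Mx} \leq \|M\||x|^2$), one obtains
\begin{align*}
\d V(X_t) \leq -\beta V(X_t) \d t + K' \d t + \d \mathcal{M}_t
\end{align*}
with $\beta = (2\|M\|)^{-1}$, $K' = C_{1/2} + \tr(B^\top M B)$, and $\mathcal{M}_t = \int_0^t 2\braket{M X_s, B \d W_s}$ a local martingale.

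Finally I would take expectations and apply Gronwall. The only technical point is that $\mathcal{M}_t$ is a priori only a local martingale, so I would stop at $\tau_R = \inf\{t : |X_t| \geq R\}$, take expectations (the stopped stochastic integral is a genuine martingale), apply Gronwall to $t \mapsto \ee[V(X_{t\wedge\tau_R})]$, and let $R \to \infty$ invoking Fatou's lemma together with the standard non-explosion estimate for globally Lipschitz coefficients. This yields
\begin{align*}
\ee[V(X_t)] \leq \Exp{-\beta t} V(x) + \frac{K'}{\beta}\big(1 - \Exp{-\beta t}\big),
\end{align*}
and setting $\gamma := \Exp{-\beta}$ and $K := K'/\beta$ gives the stated inequality. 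The one spot requiring care is the localization argument and the uniform control of $\ee[V(X_{t\wedge\tau_R})]$ as $R\to\infty$, but the global Lipschitz assumption on $F$ together with the quadratic nature of $V$ makes this routine.
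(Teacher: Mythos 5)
Your argument follows the same path as the paper's: the Lyapunov identity $A^\top M + MA = -I$, It\^o applied to $V(X_t)$, absorption of the $F$-term via (G), comparison of $|x|^2$ with $V(x)$, and Gr\"onwall. You add the stopping-time localization for the stochastic integral, which the paper omits but which is a standard (and harmless) point of care. One small but notable difference: where the paper invokes the lower bound $V(x)\geq c_3|x|^2$ to replace $-\tfrac12|X_s|^2$ by $-\tfrac{1}{2c_3}V(X_s)$ (this is actually the wrong direction; the step requires an \emph{upper} bound $V(x)\leq c_3|x|^2$), you correctly use $|x|^2 \geq V(x)/\|M\|$, which is the upper bound $V(x)\leq\|M\||x|^2$. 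So your version of that step is the right one; the paper appears to contain a minor slip in the direction of the inequality that you implicitly repair. In substance the two proofs are the same.
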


	\begin{proof}
		Fix an initial condition $X_0 \in \rr^d$. First note that we have
		$$
		\braket{D_x V(x), A x} = 2\braket{x, MA x} = \int_{0}^\infty \od{}{s} \braket{\Exp{s A} x, \Exp{s A} x} \d s
		= -|x|^2.
		$$
		On the other hand, by assumption~\textnormal{(G)}, there exists $c_1 > 0$ such that $|F(x)| \leq \tfrac{1}{8\|M\|}|x| + c_1 $ and thus there exists a constant~$c_2>0$ depending on~$c_1$ and~$\|M\|$ such that
		$$
			\braket{D_x V(x), A x + F(x)} \leq -\tfrac 12 |x^2| + c_2.
		$$
		for all $x \in \rr^d$.

		By It\^o's lemma applied to the smooth function~$V$ (with no explicit $t$-dependence),
		\begin{align*}
			\d V(X_t)
			&= \braket{D V(X_t), A X_t + F(X_t)} \d t + 2 \braket{M X_t, B \, \d W_t} + \tr(M B B^*) \d t
		\end{align*}
		and thus
		\[
			\ee V(X_t) \leq V(X_0) + \int_0^t (-\tfrac{1}{2}\ee |X_s|^2 + c_2) \d s + \tr(M B B^*)t.
		\]
		Since $\Exp{s A}$ is nonsingular for any~$s \in [0,1]$ by assumption~\textnormal{(D)}, there exists $c_3 > 0$ depending on the eigenvalues of~$A$ such that
		$$
		 V(x) \geq \int_0^1 |\Exp{s A} x|^2 \geq c_3 |x|^2
		$$
		for all $x \in \rr^d$. Hence,
		\[
			\ee V(X_t) \leq V(X_0) - \int_0^t \tfrac{1}{2 c_3}\ee V(X_s) \d s + (c_2 + \tr(M B B^*))t
		\]
		By Gr\"onwall's inequality, we conclude that there exists a constant $K > 0$ (independent of~$X_0$) such that
		\[
			\ee V(X_t) \leq \Exp{-\frac{t}{2c_3}} V(X_0) + K. \qedhere
		\]
	\end{proof}

\subsection{Approachability and solid controllability} \label{sec:contr}
	The goal of this section is to show the existence of a time~$T>0$ and a nontrivial measure~$\nu_T$ on~$\rr^d$ such that the bound
	\[
		P^F_{T}(x, {\cdot\,}) \geq \nu_T
	\]
	holds for all $x\in\rr^d$ such that $V(x) \leq 1 + 2K(1-\gamma)^{-1}$, where~$\gamma$ and~$K$ are as in Lemma~\ref{cor:lyap}. This is done in two steps: we first control the probability of reaching neighbourhoods of~$x_0$ where~\textnormal{(H)} holds, and then the probability of reaching an arbitrary set when starting from~$x'$ close enough to~$x_0$.

	Throughout this section, the controlled nonlinear system~$\eqref{eq:pert-sys}$ is to be thought of as a perturbation of the controlled linear system
	\begin{align}
      \begin{cases}\label{eq:non-pert-sys}
		\quad \dot z(t)  = A z(t) + B \dot\eta (t ),  \\
		\quad z(0) = \xinit.
   \end{cases}
	\end{align}
	For $\eta \in C_0([0,T];\rr^n)$ and $0 \leq t \leq T$, $S_T(\xinit,\eta)$ is defined as the solution at time~$t$ of the problem~\eqref{eq:non-pert-sys}.

	We set $R := 1 + {2K}({1 - \gamma})^{-1}$.
	We make extensive use of the compact set~$\{x\in \rr^d : V(x) \leq R\}$, which we often write as $\{V \leq R\}$ for short.

	We start by showing that the point~$x_0$ in which the weak H\"ormander condition \textnormal{(H)} holds can be approximately reached with suitable control when starting from $\{V \leq R\}$.\footnote{This part of the argument actually holds for any~$x_0 \in \rr^d$, regardless of the H\"ormander condition.} To do this, we need a technical lemma on a matrix often referred to as the \emph{controllability Gramian}, which is used to construct relevant controls; see {e.g.}~\cite[\S\S{1.2--1.3}]{Cor}.

	\begin{lemma}\label{lem:norm-Qinv}
		If $A$ and $B$ are such that the Kalman condition~\textnormal{(K)} is satisfied with $d_*$, then the symmetric positive definite matrix
		\[
			Q_T = \int_0^T \Exp{t A} B B^* \Exp{t A^*} \d t
		\]
		has full rank and
		$
			\|Q_T^{-1}\| = O(T^{1-2d_*})
		$
		as $T \to 0$.
	\end{lemma}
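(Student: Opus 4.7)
The plan is to analyze $Q_T$ via the identity
\[
	\braket{x, Q_T x} = \int_0^T |B^* \Exp{t A^*} x|^2 \d t,
\]
so that both full rank and the asymptotic lower bound on $\lambda_{\min}(Q_T)$ reduce to estimating this scalar integral for $x \in \rr^d$. Positive definiteness for every $T > 0$ is immediate: if $\braket{x, Q_T x} = 0$, then $B^* \Exp{t A^*} x \equiv 0$ on $[0,T]$, and repeated differentiation at $t = 0$ gives $B^* (A^*)^k x = 0$ for all $k \geq 0$, forcing $x$ to be orthogonal to every column of every $A^k B$, which contradicts the Kalman condition~\textnormal{(K)} unless $x = 0$.

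For the small-$T$ asymptotics, I would Taylor-expand $t \mapsto B^* \Exp{t A^*} x$ around $t = 0$, writing it as $p_x(t) + r_x(t)$ with $p_x(t) = \sum_{k=0}^{d_* - 1} \frac{t^k}{k!} B^* (A^*)^k x$ the Taylor polynomial of degree $d_* - 1$. Setting $a_k := B^*(A^*)^k x / k!$, direct integration gives
\[
	\int_0^T |p_x(t)|^2 \d t = \sum_{k, l = 0}^{d_* - 1} \frac{T^{k + l + 1}}{k + l + 1} \braket{a_k, a_l},
\]
and the rescaling $b_k := T^{k + 1/2} a_k$ turns the right-hand side into $\braket{b, H b}$, where $H$ is the $d_* \times d_*$ Hilbert matrix with entries $(k + l + 1)^{-1}$, a classical positive-definite matrix. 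Hence, for $T \leq 1$, one gets $\int_0^T |p_x|^2 \d t \geq \lambda_{\min}(H) T^{2d_* - 1} \sum_{k=0}^{d_*-1} |a_k|^2$. Invoking~\textnormal{(K)} once more (together with Cayley--Hamilton) to assert that $x \mapsto (B^*(A^*)^k x)_{k=0}^{d_*-1}$ is an injective linear map, we obtain $\sum_{k=0}^{d_*-1} |B^*(A^*)^k x|^2 \geq c_0 |x|^2$ for some $c_0 > 0$, and absorbing factorials yields $\int_0^T |p_x|^2 \d t \geq c_1 T^{2d_* - 1} |x|^2$ uniformly in $x$.

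The remainder is controlled by the tail of the exponential series: $|r_x(t)| \leq \|B\| \frac{(t \|A\|)^{d_*}}{d_*!} \Exp{t \|A\|} |x|$, so $\int_0^T |r_x|^2 \d t \leq C T^{2d_* + 1} |x|^2$. By the triangle inequality in $L^2([0,T])$, the polynomial contribution dominates for $T$ sufficiently small, giving $\int_0^T |B^* \Exp{tA^*} x|^2 \d t \geq \tfrac{c_1}{4} T^{2d_* - 1} |x|^2$ and therefore $\|Q_T^{-1}\| \leq \tfrac{4}{c_1} T^{1 - 2d_*}$. The main technical step is the coercivity of the Gram form associated with $p_x$: one must correctly track how the rescaling in $T$ interacts with the Hilbert-matrix structure and extract a bound uniform in the direction of $x$. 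The exponent $2d_* - 1$ arises naturally from this rescaling, since Kalman ensures that exactly $d_*$ Taylor coefficients suffice to recover $x$, and each additional coefficient contributes one factor of $T$ to $|b_k|^2 = T^{2k+1} |a_k|^2$.
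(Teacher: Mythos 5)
Your proof is correct and somewhat more explicit than the paper's, though it rests on the same underlying idea. The paper argues indirectly: it observes that $T \mapsto \braket{x, Q_T x}$ is real-analytic and nonnegative, then shows by contradiction — differentiating $\partial_T\braket{x,Q_Tx}=|B^*\Exp{TA^*}x|^2$ repeatedly at $T=0$ — that for every unit vector $x$ some derivative of order $\leq 2d_*-1$ is nonzero, which forces the first nonvanishing Taylor coefficient of $\braket{x,Q_Tx}$ at $T=0$ to occur at order $\leq 2d_*-1$; the translation of this pointwise statement into the uniform bound $\|Q_T^{-1}\|=O(T^{1-2d_*})$ is left to the reader. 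You instead split $B^*\Exp{tA^*}x = p_x(t) + r_x(t)$, lower-bound $\int_0^T|p_x|^2$ via positive-definiteness of the Hilbert matrix after the rescaling $b_k = T^{k+1/2}a_k$, use Kalman to get $\sum_{k<d_*}|B^*(A^*)^kx|^2 \geq c_0|x|^2$, and control the remainder by the tail of the exponential series. What your version buys is that the uniformity over the unit sphere is explicit (the constant $c_1/4$ is independent of $x$) and the exponent $2d_*-1$ is extracted transparently from the rescaling, whereas the paper's argument is shorter but conceals the compactness/uniformity step. One small point of hygiene: $H$ is really $H\otimes I_n$ when the $a_k$ are $\rr^n$-valued, but the spectrum is unchanged, so the bound via $\lambda_{\min}(H)$ is unaffected.
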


	\begin{proof}
		Because $Q_T$ is symmetric and by real-analyticity of the maps $(0,1) \ni T \mapsto \braket{x, Q_T x} \in \rr_+$, it suffices to show that for each $x \in \rr^d$ with~$|x| = 1$, there exists $k \leq 2 d_* - 1$ such that
		\[
		 \partial_T^k \braket{x, Q_T x} |_{T=0}\neq 0.
		\]
		Suppose for contradiction that there exists such~$x$ with $|x|=1$ and
		$0 = \partial_T^k \braket{x, Q_T x}$ for each $k \leq 2 d_* -1$. From the first derivative, we have
		\[
			B^* x = 0.
		\]
		From the third derivative, we have
		\[
			\braket{x, B B^* (A^*)^2 x} +2\braket{x,  A B B^* A^*x} +\braket{x, A^2 B B^* x} = 0,
		\]
		but then, using again the consequence of the vanishing first derivative, we have
		\[
			B^* A^*x = 0.
		\]
		Inductively, from the $(2j+1)$th derivative, we have
		\[
			B^* (A^*)^j x = 0,
		\]
		for $j = 0, 1, \dotsc d_* - 1$. We conclude that $$x \in \bigcap_{j=0}^{d_*-1} \ker (B^* (A^*)^{j}) = \bigcap_{j=0}^{d_*-1} (\operatorname{ran} (A^j B))^{\perp},$$ contradicting the Kalman condition.
	\end{proof}

	\begin{proposition}
		Fix $x_0 \in \rr^d$. If the growth condition~\textnormal{(G)} and the Kalman condition~\textnormal{(K)} hold, then for any $x \in \rr^d$, $\delta > 0$ and $T > 0$ there exists a control~$\eta_{x,\delta,T} \in C^1_0([0,T];\rr^n)$ such that $S^F_T(x, \eta_{x, \delta, T}) \in B(x_0,\tfrac 12 \delta)$.
	\end{proposition}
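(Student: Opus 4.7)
The plan is to use the controllability Gramian from Lemma~\ref{lem:norm-Qinv} to write down, for each target $y \in \rr^d$, a $C^\infty$ control that drives the \emph{linear} system from $x$ to $y$ in time $T$, and then to select $y$ by a topological fixed-point argument so as to absorb the defect produced by $F$ in the perturbed system. The sublinear growth condition~\textnormal{(G)} will make this fixed-point argument close, and will in fact yield exact reaching of~$x_0$, so the $\tfrac 12 \delta$ tolerance is automatic.

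Concretely, I would set
\begin{equation*}
\tilde\eta_y(t) := \int_0^t B^* \Exp{(T-s)A^*}\, Q_T^{-1}\bigl(y - \Exp{TA} x\bigr) \, \d s \in C_0^\infty([0,T];\rr^n),
\end{equation*}
which is well-defined by Lemma~\ref{lem:norm-Qinv} and, by the standard Gramian computation, satisfies $S_T(x, \tilde\eta_y) = y$ for the linear system~\eqref{eq:non-pert-sys}. Plugging $\tilde\eta_y$ into the perturbed equation~\eqref{eq:pert-sys} and using Duhamel's formula yields $S^F_T(x, \tilde\eta_y) = y + E(y)$, where
\begin{equation*}
E(y) := \int_0^T \Exp{(T-s)A}\, F\bigl(S^F_s(x,\tilde\eta_y)\bigr) \, \d s
\end{equation*}
is continuous in $y$ by standard ODE regularity. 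Reaching $x_0$ exactly is then equivalent to solving the fixed-point equation $y = x_0 - E(y)$.

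The main technical step is the sublinear estimate $|E(y)| \leq C(1 + |x| + |y|)^{\power}$. Because both $\tilde\eta_y$ and $\dot{\tilde\eta}_y$ depend \emph{linearly} on $y$, with prefactors controlled by $\|Q_T^{-1}\|$ (finite by Lemma~\ref{lem:norm-Qinv} for the fixed $T$ at hand), a Gronwall estimate on~\eqref{eq:pert-sys} together with the global Lipschitz character of $F$ produces an a~priori linear bound $\sup_{s \in [0,T]} |S^F_s(x,\tilde\eta_y)| \leq C'(1 + |x| + |y|)$ uniformly in $y$; feeding this back into the definition of $E$ and using~\textnormal{(G)} yields the exponent~$\power$. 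This is where I expect the main obstacle to lie: one has to keep the dependence on $y$ linear \emph{inside the parenthesis}, so that~\textnormal{(G)} subsequently produces the exponent $\power<1$ overall rather than something worse after iteration.

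With that estimate in hand, since $\power < 1$, for $r$ large enough the continuous self-map $\Psi : \overline{B(x_0, r)} \to \rr^d$ defined by $\Psi(y) := x_0 - E(y)$ sends $\overline{B(x_0,r)}$ into itself. Brouwer's fixed-point theorem then furnishes $y^* \in \overline{B(x_0,r)}$ with $\Psi(y^*) = y^*$, and the identity $S^F_T(x,\tilde\eta_{y^*}) = y^* + E(y^*) = x_0$ shows that $\eta_{x,\delta,T} := \tilde\eta_{y^*}$ is an admissible choice, landing exactly at $x_0$ and in particular well within $B(x_0, \tfrac{1}{2}\delta)$.
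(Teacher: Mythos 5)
Your proof is correct but follows a genuinely different route than the paper's. The paper's argument is a short-time one: it uses the Gramian control $\zeta_{x,T}$ to hit $x_0$ exactly in the \emph{linear} dynamics, invokes the $T\to 0$ asymptotics $\|Q_T^{-1}\| = O(T^{1-2d_*})$ from Lemma~\ref{lem:norm-Qinv}, and runs a Gr\"onwall comparison between the perturbed and unperturbed trajectories to show that the deviation $|S^F_T(x,\zeta_{x,T})-x_0|$ vanishes as $T \to 0$; this is precisely where the restriction $\power < \tfrac{1}{2d_*}$ enters (so that $\power(1-m)+1>0$ with $m=2d_*-1$). For larger $T$, the paper coasts with the zero control and steers only in a final short window. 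You instead keep $T$ fixed, aim the Gramian control at a \emph{variable} endpoint $y$, and close a Brouwer fixed-point argument in $y$. This uses only the invertibility of $Q_T$ at the fixed $T$ in question (not the $T\to 0$ scaling), and the bare sublinearity $\power < 1$; it yields an exact hit at $x_0$, and it avoids the case split on small versus large $T$. Since the Lyapunov estimate in Lemma~\ref{cor:lyap} also uses~\textnormal{(G)} only through sublinearity, your argument would in fact allow the growth exponent in~\textnormal{(G)} to be relaxed from $\power < \tfrac{1}{2d_*}$ to any $\power < 1$ for the purposes of Theorem~\ref{thm:main-new}. A minor remark: the ``main obstacle'' you flag is benign\,---\,the linear a~priori bound $\sup_{s\leq T}|S^F_s(x,\tilde\eta_y)| \leq C'(1+|x|+|y|)$ follows directly from Gr\"onwall and the global Lipschitz bound on $F$, with no iteration; feeding it once into your defect integral $E(y)$ together with~\textnormal{(G)} gives $|E(y)| \leq C''(1+|x|+|y|)^\power$ cleanly.
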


	\begin{proof}
		Let $x \in \rr^d$ and $\delta > 0$ be arbitrary. Because the Kalman condition~\textnormal{(K)} holds, for any~$T \in (0,1]$, the control
		\[
			\zeta_{x,T}(t) := \int_0^t B^* \Exp{(T-s)A^*} Q_T^{-1} (x_0 - \Exp{-T A}x) \d s
		\]
		is such that $S_T (x, \zeta_{x,T}) = x_0$; see {e.g.}~\cite[\S{1.2}]{Cor}. We immediately have the bound
		\[
		|\dot \zeta_{x,T}(t)| \leq \|B\|\Exp{T\|A\|}\|Q_T^{-1}\|(|x_0| + \Exp{T\|A\|}|x|)
		\]
		and the hypotheses yield through Lemma~\ref{lem:norm-Qinv} the existence of a constant $C>0$ depending on~$A$ and~$B$ such that
		\[
		|\dot \zeta_{x,T}(t)| \leq C (|x| + |x_0|)T^{-m}
		\]
		for all $T \in (0,1]$, where $m := 2d_* - 1$.

		With $z_T(t) := S_t (x, \zeta_{x,T})$, $x_T(t) := S^F_t (x, \zeta_{x,T})$ and $y_T(t) := x_T(t) - z_T(t)$, we have
		\begin{align*}
			\dot y_T (t) &= A y_T(t) + F(x_T(t)), \\
			y_T(0) &= 0.
		\end{align*}
		Then, for $t \in [0,T]$,
		\begin{align*}
			y_T(t) &= \int_0^t \Exp{(t-s)A} F(x_T(s)) \d s
				= \int_0^t \Exp{(t-s)A} F(y_T(s) + z_T(s)) \d s.
		\end{align*}
		By~\textnormal{(G)}, there exists $C'>0$ depending on~$A$ and~$F$ only such that
		\begin{align*}
			|y_T(t)| \leq C' \int_0^t 1 + |y_T(s)|^\power + |z_T(s)|^\power \d s.
		\end{align*}
		On the other hand,
		\begin{align*}
			|z_T(t)| &\leq |\Exp{tA} x| + \int_0^t |\Exp{(t-s)A} B \dot\zeta_{x,T}(s)| \d s \\
				&\leq C'|x| + t C\Exp{T\|A\|}\|B\|(|x|+|x_0|)T^{-m}.
		\end{align*}
		Combining these two inequalities, there exists a constant~$C'' > 0$ such that
		\begin{align*}
			|y_T(t)|
				&\leq C'' \int_0^t |y_T(s)|\d s + t C'' (1 + |x| + |x_0|) (1 + T^{a(1-m)})
		\end{align*}
		Recall that $0 \leq \power < \frac{1}{2d_*}$ and $m = 2d_* + 1$. Hence,
		\[
			\power(1-m) + 1 > 0
		\]
		and, by Gr\"onwall's inequality, there exists $T_{x,\delta} \in (0,1]$ small enough, depending continuously on~$x$ and $\delta$, such that $|S^F_T(x,\zeta_{x,T}) - x_0| = |y_T(T)| < \tfrac 14 \delta$ for all $0 < T \leq T_{x,\delta}$.

		If $T \leq T_{x,\delta}$, pick $\eta_{x,\delta,T} = \zeta_{x,T}$. If $T > T_{x,\delta}$, let
		\[
			r_{T} := \sup_{0 \leq t \leq T} |S_t(x,0)| \quad \text{ and } \quad s_{T} = \min \{\tfrac 12 T, \inf_{|y| \leq r_{x,T}} T_{y,\delta}\}.
		\]
		Then, $|S_{T - s_T}(x,0)| < r_T$ and by the above $\zeta_{S_{T - s_T}(x,0), s_T}$ is such that
		\[
			S_{s_{T}}(S_{T - s_T}(x,0),\zeta_{S_{T - s_T}(x,0), s_T}) \in B(x_0,\tfrac 14 \delta).
		\]
		This corresponds to the control \[\tilde\eta_{x,\delta,T}(t) := \one_{[T-s_{T},T]}(t) \zeta_{S_{T - s_T}(x,0), s_T}(t-(T-s_{T}))\] defined on~$[0,T]$. A $C^1_0([0,T];\rr^n)$ regularisation~$\eta_{x,\delta,T}$ of~$\tilde\eta_{x,\delta,T}$ will then satisfy
		$
			S_T(x,\eta_{x,\delta,T}) \in B(x_0,\tfrac 12 \delta).
		$
	\end{proof}

	\begin{proposition}\label{cor:x-to-near-x0}\label{prop:appr}
		Fix $x_0 \in \rr^d$ and $\delta>0$ and suppose that the conditions \textnormal{(G)} and \textnormal{(K)} hold. Then, the function
		\[
			(x,T) \mapsto P^F_T(x,B(x_0,\delta))
		\]
		is positive and jointly lower semicontinuous.
	\end{proposition}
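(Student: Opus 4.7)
My plan is to deduce both assertions from the preceding proposition combined with two standard facts: (i) the Wiener measure $\ell$ on $C_0([0,T];\rr^n)$ has full topological support, so every nonempty open set has positive $\ell$-mass; and (ii) the law $P^F_T(x,{\cdot\,})$ depends weakly continuously on $(x,T) \in \rr^d \times (0,\infty)$, which will follow from the global Lipschitz character of $A + F$ (built in as an assumption on~$F$ and already compatible with~\textnormal{(G)}).

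For positivity, I fix $(x,T) \in \rr^d \times (0,\infty)$ and let $\eta_{x,\delta,T} \in C^1_0([0,T];\rr^n)$ be the control supplied by the preceding proposition, so that $S^F_T(x,\eta_{x,\delta,T}) \in B(x_0,\tfrac 12 \delta)$. Since $\eta \mapsto S^F_T(x,\eta)$ is continuous (as recorded immediately after the definition of~$S^F_T$), there is an open neighbourhood~$\mathcal{U}$ of $\eta_{x,\delta,T}$ in $C_0([0,T];\rr^n)$ such that $S^F_T(x,\eta) \in B(x_0,\delta)$ for every $\eta \in \mathcal{U}$. Because $\ell$ charges every nonempty open subset of $C_0([0,T];\rr^n)$,
$$
   P^F_T(x,B(x_0,\delta)) = \ell\bigl(\{\eta : S^F_T(x,\eta) \in B(x_0,\delta)\}\bigr) \geq \ell(\mathcal{U}) > 0.
$$

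For joint lower semicontinuity at $(x_*,T_*) \in \rr^d \times (0,\infty)$, I would establish weak convergence $P^F_T(x,{\cdot\,}) \to P^F_{T_*}(x_*,{\cdot\,})$ as $(x,T) \to (x_*,T_*)$ and conclude with the portmanteau theorem applied to the open set $B(x_0,\delta)$:
$$
   \liminf_{(x,T) \to (x_*,T_*)} P^F_T(x,B(x_0,\delta)) \geq P^F_{T_*}(x_*,B(x_0,\delta)).
$$
Realising all trajectories on the same probability space, a standard Gr\"onwall estimate for Lipschitz SDEs yields
$$
   \ee |X_T(x,\omega) - X_T(y,\omega)|^2 \leq \Exp{CT}|x-y|^2,
$$
which gives continuity in the initial condition uniformly on compact time intervals. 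Continuity in $T$ at $T_*$ then follows from almost-sure path continuity together with a uniform $L^2$ bound on $\{X_t(x,\omega)\}_{t \leq T_* + 1}$, obtained from the at-most-linear growth of $A + F$ (a consequence of linearity of~$A$ and of~\textnormal{(G)}) and standard moment estimates for It\^o integrals. A triangle inequality combines these into joint convergence in $L^2$, hence in distribution.

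The only mild obstacle is the bookkeeping of this joint continuity estimate; the argument uses nothing beyond~\textnormal{(G)} (for the Lipschitz bound on~$F$) and the positivity statement, which in turn uses~\textnormal{(K)} via the preceding proposition. The boundary point $T = 0$ is naturally excluded from the positivity claim, and presents no issue for lower semicontinuity there since $P^F_0(x,\cdot) = \delta_x$ and $x \mapsto \delta_x(B(x_0,\delta)) = \one_{B(x_0,\delta)}(x)$ is itself lower semicontinuous.
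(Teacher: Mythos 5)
Your argument is correct and takes essentially the same route as the paper. For positivity, the paper invokes the Stroock--Varadhan support theorem and remarks in a footnote that, for additive noise with globally defined solutions, the theorem reduces to exactly the full-support-plus-continuity argument you spell out; the two are interchangeable. For lower semicontinuity the paper also realizes all trajectories on a common probability space and relies on joint continuity of $(x,T) \mapsto X_T(x,\omega)$, but concludes more directly: since $B(x_0,\delta)$ is open, $(x,T) \mapsto \one_{B(x_0,\delta)}(X_T(x,\omega))$ is jointly lower semicontinuous for a.e.\ $\omega$, and Fatou's lemma then gives lower semicontinuity of $(x,T) \mapsto \ee\,\one_{B(x_0,\delta)}(X_T(x,\omega))$. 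Your weak-convergence/portmanteau route is valid, but the $L^2$ Gr\"onwall estimate and moment bounds are more machinery than is needed: a.s.\ joint continuity already yields convergence in probability, hence in distribution, without any moment estimates, and the Fatou argument sidesteps even that reduction. Either packaging is fine; the Fatou version is the more economical one.
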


	\begin{proof}
		For any $x \in\rr^d$ and $T > 0$, there exists $\eta_{x,\delta,T} \in C^1_0([0,T];\rr^n)$ such that $S^F_T(x,\eta_{x,\delta,T}) \in B(x_0,\tfrac 12 \delta)$.
		By the Stroock--Varadhan support theorem,%
		\footnote{In the case of an additive noise, the Stroock--Varadhan support theorem can be given a direct proof by continuity arguments even if the vector field is unbounded, as long as the solutions are defined globally in time.}
		the support of the distribution of paths $[0,T] \ni t \mapsto X_t(x,\omega)$ contains the closure of
		$
	 		\{[0,T] \ni t \mapsto S_{t}^F(x,\eta) : \eta \in C^1_0([0,T];\rr^n)\}
		$
		with respect to the supremum norm on $C_0([0,T];\rr^d)$. In particular,
		$
			P_{T}^F(x, B(x_0,\delta)) > 0.
		$

		For $\pp$-almost every $\omega \in \Omega$, the path $t \mapsto W_t(\omega)$ is continuous. Since $X_t$ satisfies the integral equation
		\[
			\quad X_t(x,\omega) = x + \int_0^t AX_s(x,\omega) + F(X_s(x,\omega)) \d s + B W_t(\omega)
		\]
		with $y \mapsto Ay + F(y)$ globally Lipschitz and $t \mapsto B W_t(\omega)$ continuous, a standard argument shows that the map
		$
			(x,T) \mapsto X_T(x,\omega)
		$
		is jointly continuous. Therefore, the function
		\[
			(x,T) \mapsto \one_{\{\omega' \in \Omega \ : \ X_T(x,\omega') \in B(x_0,\delta)\}} (\omega)
		\]
		is jointly lower semicontinuous for $\pp$-almost all~$\omega \in \Omega$. Then, so is the map
		\[
			(x,T) \mapsto \int_\Omega \one_{\{\omega' \in \Omega \ : \ X_T(x,\omega') \in B(x_0,\delta)\}} (\omega) \d \pp(\omega)
		\]
		by Fatou's lemma.
	\end{proof}

	Now that we have established that, starting from~$\{V \leq R\}$, any neighbourhood of~$x_0$ can be suitably reached, we seek a minorization for transitions from points close to~$x_0$ to arbitrary points of the space. In~\cite{Sh17}'s study of \textsc{sde}s on compact manifolds, the notions of decomposability and \emph{solid controllability} are used to show that the {weak H\"ormander condition}~\textnormal{(H)} in~$x_0$ is sufficient to provide appropriate control of the transition probabilities from points~$x'$ close enough to~$x_0$.
  \begin{enumerate}
	  \item[\textbf{(sC)}] a system $S : \rr^d \times E \to \rr^d$, where~$E$ is a Banach space, is said to be \emph{solidly controllable} from~$x_0$, with compact $Q \Subset E$, if there is a ball~$\scball$ in~$\rr^d$ and a number~$\epsilon > 0$ such that if a continuous map $\Phi: Q \to \rr^d$ satisfies
	  \[
		  \sup_{\zeta \in Q} |\Phi(\zeta) - S(x_0, \zeta)| \leq \epsilon,
	  \]
	  then $\Phi(Q) \supseteq \scball$.
  \end{enumerate}
	Most of the ideas for the next three results are present in different parts of~\cite{Sh17}; also see~\cite{Sh07}. We retrieve the key steps and repiece them in a way that is suitable for our endeavour.

  \begin{lemma}\label{lem:red}
	  If there exists a closed ball $\scballti \Subset \rr^d$ and a continuous function $f : \scballti \to E$ such that $S(x_0,f(x)) = x$ for all $x \in \scballti$, then~$S$ satisfies the solid controllability condition~\textnormal{(sC)} from~$x_0$, with $Q = f(\scballti)$.
  \end{lemma}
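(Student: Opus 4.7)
The plan is to reduce the statement to a standard Brouwer degree argument: once $\Phi \circ f$ is uniformly close to the identity on $\scballti$, its image must cover a ball concentric with $\scballti$ whose radius shrinks by the size of the perturbation. Write $\scballti = \overline{B(x_c, r)}$ and set $Q := f(\scballti)$; this $Q$ is compact in $E$ as the continuous image of a compact set, so the candidate $Q$ in~\textnormal{(sC)} is the one supplied by the lemma. Given any continuous $\Phi : Q \to \rr^d$ satisfying $\sup_{\zeta \in Q} |\Phi(\zeta) - S(x_0, \zeta)| \leq \epsilon$, define $G : \scballti \to \rr^d$ by $G(x) := \Phi(f(x))$. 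Using the hypothesis $S(x_0, f(x)) = x$, we get the uniform estimate $|G(x) - x| \leq \epsilon$ for every $x \in \scballti$.

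Fix $\epsilon \in (0, r)$ and let $y \in \rr^d$ with $|y - x_c| < r - \epsilon$. For $x \in \partial \scballti$ one has $|x - y| \geq r - |y - x_c| > \epsilon$, so the straight-line homotopy
\[
    H_t(x) := (1-t)(x - y) + t(G(x) - y) = (x - y) + t(G(x) - x)
\]
between $\operatorname{id} - y$ and $G - y$ satisfies $|H_t(x)| \geq |x - y| - t|G(x) - x| > 0$ on $\partial \scballti$ for every $t \in [0,1]$. Homotopy invariance of the Brouwer degree then yields
\[
    \deg(G - y, \operatorname{int} \scballti, 0) = \deg(\operatorname{id} - y, \operatorname{int} \scballti, 0) = 1,
\]
so there exists $x \in \scballti$ with $G(x) = y$, i.e.\ $y \in \Phi(Q)$.

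Choosing for instance $\epsilon := r/2$ and $\scball := B(x_c, r/2)$, the argument above shows that $\Phi(Q) \supseteq \scball$ for every continuous $\Phi : Q \to \rr^d$ with $\sup_{\zeta \in Q} |\Phi(\zeta) - S(x_0, \zeta)| \leq \epsilon$, which is exactly condition~\textnormal{(sC)} with compact $Q = f(\scballti)$. The only delicate point is keeping the homotopy $H_t$ away from $0$ on the boundary of $\scballti$, and this is secured by the margin $r - |y - x_c| > \epsilon$ together with the uniform bound $|G(x) - x| \leq \epsilon$; no regularity beyond continuity of $f$ and $\Phi$ is needed.
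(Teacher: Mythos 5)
Your proof is correct and follows the paper's strategy: pass to $G := \Phi \circ f : \scballti \to \rr^d$, use the hypothesis $S(x_0, f(x)) = x$ to get the uniform bound $|G(x) - x| \leq \epsilon$, and conclude that a continuous map of a closed ball uniformly $\epsilon$-close to the identity must cover a concentric ball of radius reduced by $\epsilon$. The only divergence is in the final topological step: you invoke homotopy invariance of the Brouwer degree, whereas the paper applies Brouwer's fixed-point theorem to the auxiliary map $\Psi_{x'}(x) := x' - G(x) + x$ (which sends $\scballti$ into itself precisely because $|x' - \Psi_{x'}(x)| = |G(x) - x| \leq \epsilon$ and $x'$ is at distance at least $\epsilon$ from $\partial\scballti$); these are two standard, interchangeable implementations of the same covering fact.
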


  \begin{proof}
	  Take $\epsilon < \tfrac 14 \operatorname{diam}(\scballti)$ and set $\scball := \{x \in \scballti : d(x, \partial \scballti) \geq \epsilon \}$.
	  Let~$\Phi$ be a continuous map on~$f(\scballti)$ such that
	  \[
		  \sup_{\zeta \in f(\scballti)} |\Phi(\zeta) - S(x_0, \zeta)| \leq \epsilon.
	  \]
	  Then, for any $x' \in \scball$, the continuous function $\Psi_{x'}$ defined on~$\scballti$ by
	  \[
		  \Psi_{x'}(x) = x' - \Phi(f(x)) + x
	  \]
	  maps~$\scballti$ to itself. Indeed,
	  \begin{align*}
		  |x'-\Psi_{x'}(x)| &= |x'-( x' - \Phi(f(x)) + x)| \\
			  & = |\Phi(f(x))- S(x_0,f(x))|
			  \leq \sup_{\zeta \in f(\scballti)} |\Phi(\zeta) - S(x_0, \zeta)|
			  \leq \epsilon.
	  \end{align*}
	  Hence, by the Brouwer fixed point theorem, there exists~$x \in \scballti$ such that $x = \Psi_{x'} (x)$, {i.e.} such that~$x' = \Phi(f(x))$. We conclude $\scball \subseteq \Phi(f(\scballti))$. 
  \end{proof}

 	 We will use this for~$S_1^F$ defined in Section~\ref{sec:setup}. In this case, the Banach space~$E$ of controls is~$C_0([0,1];\rr^n)$ equipped with the supremum norm.

	\begin{proposition}
		If the weak H\"ormander condition~\textnormal{(H)} is satisfied in~$x_0$, then ~$S_1^F$ is solidly controllable from~$x_0$, with a set $Q$ consisting of functions that are all Lipschitz with a common Lipschitz constant~$\kappa$.
	\end{proposition}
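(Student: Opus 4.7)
The plan is to apply Lemma~\ref{lem:red} with $E = C_0([0,1];\rr^n)$ and $S = S_1^F$: it suffices to produce a closed ball $\scballti \Subset \rr^d$ and a continuous right inverse $f : \scballti \to E$ satisfying $S_1^F(x_0, f(x)) = x$ for all $x \in \scballti$, then set $Q := f(\scballti)$ and arrange that the elements of $Q$ share a common Lipschitz constant.

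The right inverse $f$ will come from an inverse function theorem argument once the Fr\'echet derivative
\[
D_\eta S_1^F(x_0, \eta^*): C_0([0,1];\rr^n) \to \rr^d
\]
has been shown to be surjective for some smooth reference control $\eta^* \in C^1_0([0,1];\rr^n)$. Given surjectivity, pick a $d$-dimensional subspace $F \subset C^1_0([0,1];\rr^n)$ on which the derivative restricts to a linear isomorphism onto $\rr^d$, and apply the implicit function theorem in a small closed ball $\scballti$ centered at $S_1^F(x_0, \eta^*)$ to obtain $f$ with values in the affine finite-dimensional subspace $\eta^* + F$. The elements of any bounded subset of a finite-dimensional subspace of $C^1_0([0,1];\rr^n)$ are uniformly Lipschitz on the compact interval $[0,1]$, so shrinking $\scballti$ if necessary yields a uniform Lipschitz bound $\kappa$ on $Q$ as demanded by the statement.

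The crucial step is the surjectivity of $D_\eta S_1^F(x_0, \eta^*)$, and this is where the weak H\"ormander condition~\textnormal{(H)} enters. Writing $x^*(t) := S^F_t(x_0,\eta^*)$, the derivative sends $\xi$ to $y(1)$, where $y$ solves the linearization
\[
\dot y(t) = (A + DF(x^*(t))) y(t) + B \dot\xi(t), \qquad y(0) = 0.
\]
Surjectivity is equivalent to the positive-definiteness of the time-dependent controllability Gramian $\int_0^1 \Phi(1,s) B B^* \Phi(1,s)^* \d s$, where $\Phi$ is the resolvent of $t \mapsto A + DF(x^*(t))$. By real-analyticity this reduces to showing that the iterated $s$-derivatives of $\Phi(1,s) B e_i$ at some interior point span $\rr^d$, and an induction identifies these derivatives with the iterated Lie brackets appearing in~\textnormal{(H)} evaluated at $x_0$: brackets against $A + F$ correspond to applying the linearized drift, while brackets against $B e_j$ correspond to differentiating $DF$ in a constant direction. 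A suitable choice of $\eta^*$\,---\,for instance, a sum of short smooth bumps activating each $B e_j$ direction in turn, so that the reference trajectory visits configurations where every bracket in~\textnormal{(H)} becomes accessible\,---\,then forces the Gramian to be nondegenerate.

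I expect the main obstacle to be the algebraic bookkeeping translating Taylor coefficients of $\Phi(1,\cdot)B$ into the Lie brackets listed in~\textnormal{(H)}, and in particular arranging for $\eta^*$ so that no bracket in~\textnormal{(H)} is missed (the brackets with $Be_j$ in the second slot are the delicate ones, since they do not naturally appear when linearizing around the zero-control trajectory). Once this identification is in place, the remainder\,---\,implicit function theorem, finite-dimensional restriction, uniform Lipschitz estimate on $Q$, and application of Lemma~\ref{lem:red}\,---\,is standard functional-analytic bookkeeping and follows the template of the analogous results in~\cite{Sh17}.
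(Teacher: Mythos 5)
Your overall architecture is sensible (reduce to Lemma~\ref{lem:red} by producing a continuous right inverse $f$ into a finite-dimensional space of controls, then uniform Lipschitz follows for free), and this matches the structure of the paper's proof. But you have replaced the paper's actual construction of $f$\,---\,which is via Krener's theorem in the extended system $\rr^d\times\rr$, with bang-bang controls parametrised by switching durations $\alpha_0,\dotsc,\alpha_d$ giving an embedded parallelepiped\,---\,by an implicit-function-theorem argument around a single reference control $\eta^*$. That is a genuinely different route, and the step on which it hinges is exactly the one you leave as an open ``obstacle'': the surjectivity of $D_\eta S^F_1(x_0,\eta^*)$ for some choice of $\eta^*$.

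This gap is real, not merely cosmetic. The Taylor coefficients in $s$ of $\Phi(1,s)B$ produce only brackets of the form $\mathcal{L}_{G}^k b$ with $G = A+F$ and $b$ a column of $B$ evaluated along the reference trajectory; they do not by themselves produce brackets with multiple factors from $\mathcal{B}$, such as $\mathcal{L}_{Be_j}\mathcal{L}_{A+F}Be_i = -D^2F[Be_j,Be_i]$, which are permitted by~\textnormal{(H)}. Activating those requires \emph{varying the control}, not differentiating in $s$, and your proposed fix\,---\,choosing $\eta^*$ as a sum of short bumps\,---\,is essentially a reinvention of the bang-bang construction in Krener's theorem, only now you would need to (a) track which bracket appears at which derivative order at which interior time, and (b) show that this still controls the span at the single point $x_0$ as opposed to along the trajectory $x^*(\cdot)$. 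You correctly identify this bookkeeping as the hard part, but you don't carry it out, so the proof is incomplete where it matters most. Note also that the paper does use surjectivity of a derivative together with Sard's theorem, but only in the \emph{next} proposition (Proposition~\ref{lem:appr-solid-cont}), \emph{after} solid controllability has been established: solid controllability guarantees that $\Phi(Q)$ contains a ball, so the image has nonempty interior and Sard supplies a regular value. Running Sard to get surjectivity of the derivative as your \emph{first} step would require already knowing the image has nonempty interior\,---\,i.e., accessibility\,---\,which is precisely what Krener's theorem (or an equivalent) must supply. Your argument as written either silently presupposes accessibility or leaves the bracket identification unproven; either way the core content of~\textnormal{(H)}~$\Rightarrow$~accessibility is missing.
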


	\begin{proof}
		By the previous lemma, to show solid controllability, it suffices to provide a ball $\scballti \Subset \rr^d$ and a continuous function $f : \scballti \to C_0([0,1];\rr^n)$ such that $S^F_1(x_0,f(x_*)) = x_*$ for all $x_* \in \scballti$.

		As part of Theorem~2.1 in~\cite[\S{2.2}]{Sh17}, it is shown in a similar setting that the H\"ormander condition implies the existence of a ball $\scballti \Subset \rr^d$ and a continuous function $\tilde f : \scballti \to L^2([0,1];\rr^n)$ such that the solution of
		\[
			\begin{cases}
				\dot x = A x + F(x) + B \, \tilde f(x_*) \\
				x(0) = x_0
			\end{cases}
		\]
		satisfies $x(1) = x_*$.
		Moreover,
		$
		 \kappa := \sup_{x_* \in \scballti} \|\tilde f(x_*)\|_{C_0} < \infty.
		$ The construction of~$\scballti$ and~$\tilde f$ uses local arguments and can be directly translated to our setup.

		\begin{itemize}
		\item[]
		{\small
		The idea behind the proof is the following. Consider the following extended problem for~$ y(t) = (x(t),s(t))$ in~$\rr^d \times \rr$:
		\begin{equation}\label{eq:ext-prob}
			\begin{cases}
					\dot y = (Ax + F(x), 1) + (B \xi , 0) \\
					y(0) = (x_0, 0)
			\end{cases},
		\end{equation}
		where the control~$\xi$ is taken in~$L^2([0,1];\rr^n)$. The H\"ormander condition implies that the Lie algebra generated by the family $\{ \tilde V_\eta(x,s) = (Ax + F(x), 1) + (B \eta , 0): \eta\in\rr^n\}$ of vector fields has full rank at the point $(x_0,0)$. Hence, one can show using ideas from the proof of Krener's theorem that there exists a choice of small intervals $(a_l, b_l) \subset [0,1]$ and vectors~$\eta_l \in \rr^n$ for $l = 0, 1, \dotsc, d$ such that the parallelepiped $$\tilde \Pi =\{\alpha = (\alpha_0, \alpha_1, \dotsc, \alpha_d) \in \rr^{d+1} : \alpha_l \in (a_l , b_l) \}$$ embeds into $\rr^d \times \rr$ via the map
		\begin{align*}
			\phi : \tilde\Pi &\to \rr^d\times \rr \\
				\alpha &\mapsto \big(\Exp{\alpha_d \tilde V_{\eta_d}} \circ \dotsb  \circ \Exp{\alpha_0 \tilde V_{\eta_0}}\big)(x_0,0).
		\end{align*}
		In other words, $\phi$ takes~$\alpha$ to the solution~$y$ at time $T_\alpha := \alpha_0 +\alpha_1+\dotsb + \alpha_d$ of the extended problem~\eqref{eq:ext-prob} with the control
		\begin{equation}\label{eq:alpha-eta}
			\xi_\alpha(t) = \one_{[0,\alpha_0)}(t) \eta_0 + \sum_{l=1}^d \one_{[\alpha_{0} + \dotsb + \alpha_{l-1}, \alpha_{0} + \dotsb + \alpha_{l-1} + \alpha_l)}(t) \eta_l.
		\end{equation}
		Fixing an $\hat\alpha \in \tilde\Pi$ with corresponding $T_{\hat\alpha} \in (0,1]$, one finds that the solutions at time~$T_{\hat\alpha}$ of the problem
		\[
			\begin{cases}
				\dot x = A x + F(x) + B \xi_\alpha \\
				x(0) = x_0
			\end{cases}
		\]
		provide a diffeomorphsim between a neighbourhood of $\hat\alpha$ in $\{\alpha \in \tilde\Pi : T_\alpha = T_{\hat\alpha}\}$ and an open set~$O \subset \rr^d$. Inverting this diffeomorphism, one finds a function that associates to each point~$x_* \in {O}$ a control~$\xi_{\alpha(x_*)} \in L^2([0,T_{\hat\alpha}];\rr^n)$ of the form~\eqref{eq:alpha-eta}. By construction,
		$$S_{T_{\hat\alpha}}^F\Big(x_0,\int_0^{\cdot} \xi_{\alpha(x_*)}(s)\d s\Big) = x_*$$
		for all~$x_*\in O$. A standard argument then allows to find a closed ball~$\scballti \Subset \rr^d$ and a continuous function~$\tilde f : \scballti \to L^2([0,1];\rr^n)$ such that
		$$S_{1}^F\Big(x_0,\int_0^{\cdot} ({\tilde f}(x_*))(s)\d s\Big) = x_*$$
		for all~$x_* \in \scballti$.
		The supremum~$\kappa$ is bounded by the sum of the~$|\eta_l|$ used in the construction of the embedding~$\phi$.
		}
		\end{itemize}

		Let $f : \scballti \to C_0([0,1];\rr^n)$ be defined by $f(x_*) := \int_0^{\cdot} (\tilde f(x_*))(s) \d s$. Then,
		\begin{align*}
			\|f(x_*)-f(x_{**})\|_{C_0} &= \sup_{t \in [0,1]} \Big| \int_0^{t} (\tilde f(x_*))(s) \d s - \int_0^{t} (\tilde f(x_{**}))(s) \d s \Big| \\
			&\leq \|\tilde f(x_{*})-\tilde f(x_{**})\|_{L^2}
		\end{align*}
		so that~$f$ is continuous.
		We conclude that~$S_1^F$ is solidly controllable from~$x_0$,
		with $Q = f(\scballti)$. The constant~$\kappa$ is a common Lipschitz constant for all functions in~$Q$.
	\end{proof}

	\begin{proposition}\label{lem:appr-solid-cont}
		If the weak H\"ormander condition~\textnormal{(H)} is satisfied in~$x_0$, then there exist $\delta_0 > 0$ and a nonzero Borel measure~$\tilde\nu$ on~$\rr^d$ such that
		\[
			P^F_1(x', {\cdot\,}) \geq \tilde\nu
		\]
		for all $x' \in B(x_0,\delta_0)$.
	\end{proposition}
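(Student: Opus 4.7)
The plan is to combine the solid controllability of $S_1^F$ from $x_0$ obtained in the previous proposition with the decomposability of the Wiener measure (Remark~\ref{rem:dec}) to construct~$\tilde\nu$. Let $Q\Subset C_0([0,1];\rr^n)$, $G\subset\rr^d$ and $\epsilon>0$ be the data provided by solid controllability. Since controls in $Q$ are $\kappa$-Lipschitz and vanish at $0$, the set $Q$ is compact and equicontinuous in $C_0([0,1];\rr^n)$ by Arzel\`a--Ascoli. I take the decomposition $\ell=\ell_N\otimes\ell'_N$ with $\ell_N$ of smooth positive density $\rho_N$ on the finite-dimensional space $F_N$. Equicontinuity of $Q$ places it among the ``nice enough'' subsets alluded to in Remark~\ref{rem:dec}, so that $\sup_{\zeta\in Q}\|\Pi_N\zeta-\zeta\|_{C_0}\to 0$ as $N\to\infty$.

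Next, using joint uniform continuity of $S_1^F$ on bounded subsets of $\rr^d\times C_0([0,1];\rr^n)$, I would choose $N$ large and $r,\delta_0>0$ small enough that, for all $x'\in B(x_0,\delta_0)$, $v\in W':=\{v\in F'_N:\|v\|_{C_0}\le r\}$ and $\zeta\in Q$,
\[
|S_1^F(x',\Pi_N\zeta+v)-S_1^F(x_0,\zeta)|\le\epsilon.
\]
Applying the solid controllability property to the perturbed continuous map $\zeta\mapsto S_1^F(x',\Pi_N\zeta+v)$ then gives $g_{x',v}(\Pi_N(Q))\supseteq G$, where $g_{x',v}(u):=S_1^F(x',u+v)$. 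Disintegrating $\ell$ via Fubini yields
\[
P_1^F(x',\Gamma)\ge\int_{W'}\int_{F_N}\one_\Gamma(g_{x',v}(u))\,\rho_N(u)\,du\,\ell'_N(dv).
\]

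The main obstacle is to convert the covering statement just obtained into a quantitative lower bound on the pushforward $(g_{x',v})_*(\rho_N\,du)$, uniformly in $(x',v)$. I would exploit the construction in the proof of the previous proposition: the weak H\"ormander condition produces a control $\eta^*=f(x_*)\in Q$ for which $S_1^F(x_0,f(\,\cdot\,))$ is the identity on $\scballti$, which via the chain rule forces $D_\eta S_1^F(x_0,\eta^*):C_0([0,1];\rr^n)\to\rr^d$ to be surjective. The density of $\bigcup_N F_N$ in $C_0([0,1];\rr^n)$, combined with the Fr\'echet differentiability of $S_1^F$ recorded in Section~\ref{sec:setup}, then guarantees that the restricted linear map $D_u g_{x',v}(\Pi_N\eta^*):F_N\to\rr^d$ remains surjective for all $(x',v)\in B(x_0,\delta_0)\times W'$ once $N$ is sufficiently large. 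The implicit function theorem provides local coordinates around $\Pi_N\eta^*$ in which $g_{x',v}$ is a submersion with Jacobian bounded below uniformly in $(x',v)$, and the co-area formula then yields a density for $(g_{x',v})_*(\rho_N\,du)$ on a fixed small ball $\tilde G\ni S_1^F(x_0,\eta^*)$, uniformly bounded from below by some $c>0$. Setting $\tilde\nu:=c\,\ell'_N(W')\,\lambda_d|_{\tilde G}$, where $\lambda_d$ denotes Lebesgue measure on $\rr^d$, then produces the required nonzero Borel measure with $P_1^F(x',\,\cdot\,)\ge\tilde\nu$ for every $x'\in B(x_0,\delta_0)$.
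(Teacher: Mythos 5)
Your overall architecture matches the paper's: solid controllability plus the decomposition $\ell = \ell_N \otimes \ell'_N$, with the common Lipschitz constant on $Q$ (via Lemma~\ref{lem:not-quite-bounded}) giving $\sup_{\zeta\in Q}\|\zeta-\Pi_N\zeta\|_{C_0}\to 0$, then a change of variables via the implicit function theorem to extract a density lower bound. The one place where you diverge, and where the argument has a genuine gap, is how you produce a point of surjectivity for the derivative of $S_1^F(x_0,\cdot)$ restricted to $F_N$. You argue that since $S_1^F(x_0,f(\,\cdot\,))$ is the identity on $\scballti$, the chain rule forces $D_\eta S_1^F(x_0,\eta^*)$ to be surjective. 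That step requires $f$ to be differentiable, and nothing in the construction delivers that: the map $x_*\mapsto \tilde f(x_*)$ passes through $\alpha \mapsto \xi_\alpha$, where $\xi_\alpha$ is a piecewise constant control whose breakpoints move with $\alpha$; as a map into $L^2$ this is only H\"older-$\tfrac12$, not Lipschitz, so $\tilde f$ (and hence $f$) is continuous but not differentiable. The chain-rule identity $D_\eta S_1^F(x_0,\eta^*)\circ Df(x_*)=\mathrm{Id}$ therefore does not make sense.

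The paper sidesteps this precisely by invoking Sard's theorem: once you know $\Phi=S_1^F(x_0,\Pi_N\,\cdot\,)$ maps $Q$ onto a set containing a ball, the set of critical values of the smooth map $\Phi|_{F_N}$ has Lebesgue measure zero, so there must exist a regular point $\zeta_0\in Q$ with $D\Phi|_{\zeta_0}$ of full rank; this uses only smoothness of $S_1^F$ in the control, not any regularity of $f$. The rest of your argument (splitting $F_N=F_N^1\oplus F_N^2$, implicit function theorem, uniform Jacobian bound in $(x',\zeta^2,\zeta')$, integrating over a bump function) is essentially the paper's, so replacing the chain-rule step by the Sard argument would repair the proof.
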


	\begin{proof}
		By the previous proposition, we have solid controllability of the system~$S_1^F$ from the point~$x_0$, with a set~$Q$ consisting of Lipschitz functions.
		Then, the strategy of~\cite[\S{1.2}]{Sh17} (also see~~\cite[\S{2.1}]{Sh07}) yields the desired measure. We outline the argument for completeness and to emphasize that we do not need the full strength of the decomposability assumption made there.

		Let $\Pi_N$ be as in Remark~\ref{rem:dec} and Appendix~\ref{sec:app}. Because all controls in~$Q$ have a common Lipschitz constant~$\kappa$, we have
		\[
			\lim_{N\to\infty}\sup_{\zeta\in Q} \| \zeta - \Pi_{N} \zeta \|_{C_0} = 0
		\]
		by Lemma~\ref{lem:not-quite-bounded}.
		Then, because $S^F_1(x_0,\cdot\,) : C_0([0,1];\rr^n) \to \rr^d$ is uniformly continuous, there exists~$N \in \nn$ large enough that
		\begin{equation*}
			\sup_{\zeta\in Q} |S^F_1(x_0, \Pi_N \zeta) - S^F_1(x_0, \zeta)| < \epsilon,
		\end{equation*}
		for the~$\epsilon$ in~\textnormal{(sC)}.
		Taking $\Phi = S^F_1(x_0,\Pi_N\,\cdot\,)$ there, $\Phi(Q)$ contains a ball (which has positive measure).

		By Sard's theorem, there exists a point~$\zeta_0 \in Q$ in which $D\Phi$ has full rank. Because $\Phi \circ \Pi_N = \Phi$, this property still holds true if we restrict~$\Phi$ to~$F_N = \operatorname{ran} \Pi_N$. There then exists a $d$-dimensional subspace $F_N^1 \subseteq F_N$ such that~$D\Phi|_{\zeta_0}(F_N^1) = \rr^d$. Let $F_N^2$ be such that $F_N^1 \oplus F_N^2 = F_N$.
		We will write $\zeta \in F_N$ as $(\zeta^1,\zeta^2)$ according to this decomposition. More generally, we will write a generic element of~$C_0$ as $(\zeta^1,\zeta^2, \zeta')$ with $\zeta' \in F_N'$.
		The Jacobian of the map $S^F_1(x_0,(\,\cdot\,,\zeta_0^2,0)) : F_N^1 \to \rr^d$ at the point~$\zeta_0^1$ is a linear isomorphism between~$F_N^1$ and~$\rr^d$.

		By the implicit function theorem, there exist neighbourhoods~$V^1$ of~$\zeta_0^1$, $V^2$ of~$\zeta_0^2$, $V'$ of~$0$, $W$ of~$x_0$, $U$ of~$S_1^F(x_0,(\zeta_0^1,\zeta_0^2,0))$; and a continuously differentiable function
		$
			g : W \times U \times V^2 \times V' \to V^1
		$
		such that, for points in the appropriate open sets, $S_1^F(x',(\zeta^1,\zeta^2,\zeta')) = x_*$ is equivalent to $\zeta^1 = g(x',x_*,\zeta^2,\zeta')$.

		Recall that~$\ell$ equals the product measure~$\ell_N \times \ell_N'$ with $\ell_N$ possessing a continuous and positive density~$\rho_N$ on~$F_N$. Let $\chi : \rr^d \times C_0  \to [0,1]$ be continuous, supported in $W \times V^1 \times V^2 \times V'$, and equal to~1 at $(x_0,\zeta_0^1,\zeta_0^2,0)$. Then, for any Borel set $\Gamma \subseteq \rr^d$,
		\begin{align*}
				P_1^F(x', \Gamma) &\geq \iiint_{S_1^F(x',\cdot\,)^{-1}(\Gamma)}
				 \chi(x',\zeta^1,\zeta^2,\zeta')\rho_N(\zeta^1,\zeta^2)  \d\zeta^1 \d\zeta^2  \ell_N'(\d\zeta')  \\
				&=  \iint_{V^2 \times V'} \int_\Gamma  \frac{\chi(x',g(x',x_*,\zeta^2,\zeta'),\zeta^2,\zeta') \rho_N(g(x',x_*,\zeta^2,\zeta'),\zeta^2)}{\det[DS_1^F(x',(\,\cdot\,,\zeta^2,\zeta'))|_{g(x',x_*,\zeta^2,\zeta')}]} \d x_* \d\zeta^2 \ell_N'(\d\zeta')
		\end{align*}
		for all $x' \in W$.

		By continuity, there exist numbers~$\delta_0 > 0$ and~$\alpha > 0$ such that
		$$
			P_1^F(x', \Gamma) \geq \alpha \operatorname{vol}(\Gamma \cap B(S_1^F(x_0,\zeta_0),\delta_0))
		$$
		for all $x' \in B(x_0,\delta_0)$ and all Borel sets $\Gamma \subseteq \rr^d$. \qedhere
	\end{proof}

	Then, by the Chapman--Kolmogorov equation,
	\begin{align*}
		P^F_{T+1}(x, \Gamma)
			&\geq \int_{x' \in B(x_0,\delta_0)} P^F_{T}(x,\d x') P^F_1(x', \Gamma) \\
			&\geq \int_{x' \in B(x_0,\delta_0)} P^F_{T}(x,\d x') \tilde \nu(\Gamma)
			= P^F_{T}(x, B(x_0, \delta_0))\tilde\nu(\Gamma)
	\end{align*}
	for any Borel set $\Gamma \subseteq \rr^d$ and any $T > 0$. We conclude that for any $T > 1$ the nontrivial measure
	$$
		\nu_T := \Big(\inf_{x \in \{ V \leq R\}}P^F_{T-1}(x, B(x_0, \delta_0))\Big) \tilde\nu
	$$
	is such that
	\begin{equation*}
		P^F_T(x,{\cdot\,}) \geq \nu_T
	\end{equation*}
	for all $x \in \rr^d$ such that $V(x) \leq R$. The infimum in the definition of~$\nu_T$ is positive by Proposition~\ref{cor:x-to-near-x0}.

\subsection{Application of Harris' ergodic theorem}

Recall that, by Lemma~\ref{cor:lyap}, the conditions~\textnormal{(D)} and~\textnormal{(G)} ensure the existence of constants $K > 0$ and $\gamma \in (0,1)$ such that the function $V$ satisfies
\begin{equation}\label{eq:rcl-Lyap}
	\Big|\int_{\rr^d} V(y) P^F_t(x,\d y)\Big| \leq \gamma^t V(x) + K
\end{equation}
for all~$x \in \rr^d$ and all $t > 0$.
Using the conditions~\textnormal{(G)} and~\textnormal{(K)}, we also showed in Proposition~\ref{cor:x-to-near-x0} that, for any $\delta > 0$,
$
	(x,T) \mapsto P^F_{T}(x, B(x_0, \delta))
$
is positive and jointly lower semicontinuous. Then, we concluded from this, hypothesis~\textnormal{(H)} and the arguments of~\cite{Sh17} that, for any $T >1$, there is a nontrivial measure~$\nu_T$ such that
\begin{equation}\label{eq:nu-below}
	P^F_T(x,{\cdot\,}) \geq \nu_T
\end{equation}
for all $x \in \rr^d$ such that $V(x) \leq R$.

The existence of a function~$V$ satisfying the condition~\eqref{eq:rcl-Lyap} and a nontrivial measure~$\nu_T$ satisfying~\eqref{eq:nu-below} are precisely the hypotheses we need to apply Harris' theorem.

Indeed, considering the $T$-skeleton of our diffusion process\footnote{By $T$-skeleton of a (continuous time) stochastic process, we mean the restriction to times in the countable set~$T \nn$.} for $T = 2$, Theorem~1.2 in~\cite{HM11} yields constants $c,C >0$ and a stationary measure~$\mu\invar \in \operatorname{Prob}(\rr^d)$ against which $V$ is integrable and such that
\begin{align}\label{eq:m-bound-disc}
	\sup_{|f| \leq 1 + V}\Big| \int_{\rr^d} f(y) [P^F_{2m}(x,\d y) - \mu\invar(\d y)]\Big|
		&\leq
			C \Exp{-c(2m+2)}\big( 1 + V(x) \big)
\end{align}
for all $x \in \rr^d$ and all $m \in \nn \cup \{0\}$.

The measure $\mu\invar$ is the unique stationary probability measure for the $2$-skeleton, but it could {a priori} depend on our choice of $T$-skeleton. However, we can show that this measure is actually stationary, not only for the $2$-skeleton, but also for the continuous-time process.

Note that with $f = \one_\Gamma$ the indicator function of any Borel set $\Gamma \subseteq \rr^d$, integrating~\eqref{eq:m-bound-disc} in the variable~$x$ yields that
\begin{align}
	\Big|\int_{\rr^d} P^F_{2m}(x,\Gamma) \lambda(\d x) - \mu\invar(\Gamma) \Big| \leq C \Exp{-c(2m+2)} \Big( 1 + \int_{\rr^d} V(x) \lambda(\d x) \Big) \label{eq:tot-var-bound}
\end{align}
for any measure $\lambda \in \operatorname{Prob}(\rr^d)$.

Putting $\lambda$ defined by $\lambda(\Gamma) = \int P^F_s(x,\Gamma) \mu\invar(\d x)$ in~\eqref{eq:tot-var-bound} for some~$s\geq 0$, we have by the Chapman--Kolmogrov equation that
\begin{align*}
		&\Big|\int_{\rr^d} P^F_{2m+s}(x,\Gamma) \mu\invar(\d x) - \mu\invar(\Gamma) \Big|  \\
		& \qquad\qquad \leq C \Exp{-c(2m+2)} \Big( 1 + \int_{\rr^d} \int_{\rr^d}V(y) P^F_s(x,\d y) \mu\invar(\d x) \Big) .
\end{align*}
Using \eqref{eq:rcl-Lyap},
\begin{align*}
		\Big|\int_{\rr^d} P^F_{2m+s}(x,\Gamma) \mu\invar(\d x) - \mu\invar(\Gamma) \Big|
			&   \leq C \Exp{-c(2m+2)} \Big( 1 + K + \int_{\rr^d} V(x) \mu\invar(\d x) \Big).
\end{align*}
But the left-hand side does not depend on~$m \in \nn$ because~$\mu\invar$ is invariant for the $2$-skeleton. We therefore have $\int P^F_{s}(x, {\cdot\,}) \mu\invar(\d x) = \mu\invar$ for all $s \geq 0$, {i.e.} that~$\mu\invar$ is stationary for the orginial continuous-time process.

Now, for any $|f| \leq 1 + V$, $s \in [0,2)$ and $m \in \nn \cup \{0\}$,
\begin{align*}
	& \Big| \int_{\rr^d} f(y) [P^F_{2m+s}(x,\d y) - \mu\invar(\d y)]\Big| \\
		&\qquad\qquad
			= \Big| \int_{\rr^d}\int_{\rr^d} f(y)  P^F_{2m}(x,\d z) P^F_{s}(z,\d y) - f(y)P^F_{s}(z,\d y)\mu\invar (\d z) \Big| \\
		&\qquad\qquad
			= \Big| \int_{\rr^d} \Big( \int_{\rr^d} f(y) P^F_s(z, \d y)\Big) [P^F_{2m}(x,\d z)-\mu\invar(\d z)]\Big|.
\end{align*}
Since $|f| \leq 1 + V$, we have by \eqref{eq:rcl-Lyap} that
\begin{align*}
	\Big| \int_{\rr^d} f(y) P^F_s(z, \d y)\Big|
		&\leq \int_{\rr^d} (1 + V(y)) P^F_s(z, \d y)
		\leq (K+1) \big( 1 +  V(z) \big).
\end{align*}
Therefore, we may apply~\eqref{eq:m-bound-disc} with $f$ replaced by $\tfrac{1}{K+1}\int f(y) P^F_s({\cdot\,}, \d y)$ to get
\begin{align*}
	\Big| \int_{\rr^d} f(y) [P^F_{2m+s}(x,\d y) - \mu\invar(\d y)]\Big|
		&\leq (K+1) C \Exp{-c(2m+2)} \big( 1+ V(x) \big).
\end{align*}
Because any time~$t > 0$ can be written as $2m + s$ with $s \in [0,2)$, this is\,---\,up to a relabeling of the constants\,---\,the assertion of Theorem~\ref{thm:main-new}.

	\section{Networks of oscillators}\label{sec:net}
		We introduce the mathematical description of important physical systems that our main result covers, from the simplest to the most intricate. Based on~\cite{JPS17}, we also discuss the assumptions~\textnormal{(K)},~\textnormal{(D)} and~\textnormal{(G)} of our main result in this context. Discussion of the weak H\"ormander condition~\textnormal{(H)} is postponed to the next section.

		\subsection{The linear chain coupled to Langevin thermostats}
		Consider $L$ unit masses, each labelled by an index in~$\{1,2,\dotsc,L-1,L\}$ and whose position is restricted to a line. For $i=1,2,\dotsc, L-1$, the $i$th mass is attached to the $(i+1)$th mass by a spring of spring constant~$k > 0$. Each mass is also pinned by a spring of spring constant $\kappa \geq 0$. The position coordinate~$q_i$ of the $i$th mass is measured relative to a rest position~$q_i^\textnormal{eq}$; see Figure~1. Perturbations of this system are described by Hamiltonians of the form
		\begin{align*}
			h : \rr^L \oplus \rr^L &\to \rr \\
				(p,q) &\mapsto \frac 12 \sum_{i=1}^L p_i^2 + \frac 12 \sum_{i=1}^{L}  \kappa q_i^2 + \frac 12 \sum_{i=1}^{L-1}  k (q_{i+1} - q_i)^2 + U(q)
		\end{align*}
		where $U \in C^\infty(\rr^L; \rr)$ is a perturbing potential.

		Coupling the $1$st and $L$th oscillator to Langevin heat baths at positive temperatures~$\theta_1$ and~$\theta_L$ with positive coupling constants~$\gamma_1$ and~$\gamma_L$ yields the equations of motion
		\begin{align*}
			\d q_i &= p_i \d t,  && \hspace{-0.5in} 1 \leq i\leq L, \\
			\d p_i &= -[\kappa q_i + k (q_i - q_{i-1}) - k (q_{i+1} - q_i) + \partial_{i}U(q)] \d t, 			&& \hspace{-0.5in} 1 < i < L, \\
			\d p_1 &= -[\kappa q_1 - k (q_{2} - q_1) + \partial_{1}U(q)] \d t
				- \gamma_1 p_1 \d t + \sqrt{2\gamma_1 \theta_1} \d W_{1,t},  && \\
			\d p_L &= -[\kappa q_L + k (q_{L} - q_{L-1}) + \partial_{L}U(q)] \d t
				- \gamma_L p_L \d t + \sqrt{2\gamma_L \theta_L} \d W_{L,t}, &&
		\end{align*}
		where~$(W_{1,t})_{t\geq 0}$ and~$(W_{L,t})_{t\geq 0}$ are independent 1-dimensional Wiener processes.

		\begin{figure}
			\centering
			\usetikzlibrary{decorations.pathmorphing}
			\usetikzlibrary{patterns}
			\begin{tikzpicture}[scale=0.75, every node/.style={scale=0.75},
						declare function={
	     			noise = rnd;
	     			langevin(\t) = noise;
	     	}]

				\draw [fill= blue!14] (8,0.5) rectangle (12,-0.5);
			\draw[ultra thin,blue!60,x=2,y=10,shift={(8cm,-0.17cm)}] (0,0) -- plot [domain=0:56.9, samples=1044, smooth] (\x,{langevin(\x)});
				\node at (12.25,0) {$\theta_L$};

				\draw [fill=orange!22] (-2,0.5) rectangle (2,-0.5);
			\draw[ultra thin,orange,x=2,y=20,shift={(-2cm,-0.35cm)}] (0,0) -- plot [domain=0:56.9, samples=1044, smooth] (\x,{langevin(\x)});
				\node at (-2.25,0) {$\theta_1$};

				\draw[gray] (-2,0.15) -- (12,0.15);
				\draw[gray] (-2,-0.15) -- (12,-0.15);

				\draw[dashed,gray] (3,-1) -- (3,0.6);
				\draw[dashed,gray] (3.2,-1) -- (3.2,0.6);
				\draw[->] (3,0.5) -- (3.2,0.5);
				\node at (3.1,.75) {$q_2$};

				\node[draw,fill=white] (v1) at (0.8,0) {$\quad 1 \quad $};
				\node[draw,fill=white] (v2) at (3.2,0) {$\quad 2 \quad$};
				\node[fill=white] (v5) at (5.2,0) {$\ \dotsb \quad $};
				\node[draw,fill=white] (v3) at (7.2,0) {$\, L-1 \,$};
				\node[draw,fill=white] (v4) at (9,0) {$\quad L \quad$};

				\draw[gray] (4,-0.15) -- (6,-0.15);
				\draw[gray] (4,0.15) -- (6,0.15);

				\draw[white,pattern=north east lines] (-2,-1) rectangle (12,-1.5);
				\draw (-2,-1) -- (12,-1);

				\node at (3.35,-0.6) {$\kappa$};
				\node at (4.1,0.35) {$k$};

				\draw [decorate,decoration={coil,segment length=2.5pt}](v1) -- (1,-1);
				\draw [decorate,decoration={coil,segment length=2.5pt}](v2) -- (3,-1);
				\draw [decorate,decoration={coil,segment length=2.5pt}](v3) -- (7,-1);
				\draw [decorate,decoration={coil,segment length=2.5pt}](v4) -- (9,-1);
				\draw [decorate,decoration={coil,segment length=6pt}](v1) -- (v2);
				\draw [decorate,decoration={coil,segment length=2pt}](v3) -- (v4);
				\draw [decorate,decoration={coil,segment length=4pt}](v2) -- (v5);
				\draw [decorate,decoration={coil,segment length=4pt}](v5) -- (v3);
			\end{tikzpicture}
			\caption{Depiction of the linear harmonic chain where the $1$st and $L$th oscillator are connected to heat baths at temperatures $\theta_1$ and $\theta_L$ respectively.}
			\label{fig:springs}
		\end{figure}

		This system can be put into the form~\eqref{eq:SDE} with $d = 2L$ and $n = 2$ by setting
		$$
			X = \left(\begin{matrix} \begin{matrix} \\ p \\ \ \end{matrix} \\  \begin{matrix} \\ q \\ \ \end{matrix} \end{matrix}\right), \qquad
			A = \left(\begin{matrix} \begin{smallmatrix} -\gamma_1 & 0 & & & \\ 0 & 0 & & & \\ 0 & 0 & & & \\ & & \ddots & & \\ & & & 0 & 0 \\  & & & 0 & 0 \\ & & & 0 & -\gamma_L \end{smallmatrix}  & \begin{smallmatrix} -k-\kappa& k & & & \\ k & -2k-\kappa & & & \\  0 & k & & & \\ & & \ddots & & \\ & & & k & 0 \\ & & & -2k-\kappa & k  \\ & & & k & -k-\kappa \end{smallmatrix} \\ \one & \begin{matrix} &&&&\\&&0&& \\&&&& \end{matrix} \end{matrix}\right),
		$$
		$$
		B = \left(\begin{matrix} \begin{smallmatrix} \sqrt{2\gamma_1\theta_1} & 0 \\ 0 & 0 \\ \vdots& \vdots\\ 0 & 0  \\ 0 & \sqrt{2\gamma_L\theta_L} \end{smallmatrix}  \\  \begin{matrix} &&&&\\&&0&& \\&&&& \end{matrix} \end{matrix}\right)
		$$
		and
		$
			F(X) = F(p,q) = -\nabla_q U(q)
		$.

		The Kalman condition~\textnormal{(K)} is met for the pair~$(A,B)$ (with $d_* \leq L$) as soon as $k > 0$ and the eigenvalues of~$A$ then have strictly negative real part (condition~\textnormal{(D)} holds)~\cite{JPS17}.

		The growth condition~\textnormal{(G)} on the vector field~$F$ in the general setting is to be imposed on the gradient~$\nabla_q U$ of the perturbing potential~$U$ for the chain of oscillators: we require that it is Lipschitz and that there exists $\power \in [0,\tfrac{1}{2d_*})$ such that $|\nabla_q U(q)| = O(1 + |q|)^\power$ as $|q| \to \infty$. This potential is {not restricted} to one-body (pinning) or two-body interaction terms; it can for example include a sum of  bounded three-body interaction terms.

		\subsection{More general geometries in the Langevin regime}\label{ssec:lang}
		Let $I$ be a finite set and distinguish a nonempty subset $J \subset I$ of the sites, where the thermal noise will act. Fix a temperature $\theta_j > 0$ for the bath associated to each site $j \in J$. We can then generalize the above model to different geometries and different spring constants by considering
		\begin{equation}\label{eq:lang-mat}
			X = \left(\begin{matrix} p \\ \omega q \end{matrix}\right), \qquad
			A = \left(\begin{matrix} -\frac 12 \iota \iota^* & -\omega^* \\ \omega & 0 \end{matrix}\right), \qquad
			B = \left(\begin{matrix} \iota \\ 0 \end{matrix}\right) \vartheta^{1/2},
		\end{equation}
		and $F(p,\omega q) = -\nabla_q U(q)$,
		where
		\begin{align*}
			\omega : \rr^I \to \rr^I,
		\end{align*}
		is a nonsingular linear map\footnote{We use the symbol~$\omega$ for the linear map encoding the frequencies of the system in order to ease the comparison with other works to which we refer. Unfortunately, $\omega$ is also standard notation for elements of the underlying probability space. We trust that the meaning of the symbol is clear from the context.} and where $\vartheta$ and $\iota$ are of the form
		\begin{align*}
			\vartheta: \qquad \rr^{J} &\to \rr^{J} \\
				(u_j)_{j \in J} &\mapsto (\theta_j u_j)_{j \in J},
		\end{align*}
		and
		\begin{align*}
			\iota: \qquad \rr^{J} &\to \rr^{I} \\
				(u_j)_{j \in J} &\mapsto (\sqrt{2 \gamma_j} u_j)_{j \in J} \oplus 0_{I \setminus J}.
		\end{align*}
		Again, $\gamma_j$ is the coupling constant for the $j$th oscillator of the boundary. More explicitly, the equations of motion then take the familiar form
		\begin{align*}
			\d q &= p \d t, \\
			\d p &= -\omega^*\omega q \d t  - \nabla_{q} U(q) \d t - \tfrac 12 \iota \iota^* p \d t + \iota \vartheta^{1/2} \d W_t .
		\end{align*}

		Lemma 4.1 in~\cite{JPS17} states that if the pair $(\omega^*\omega, \iota)$ satisfies the Kalman condition~\textnormal{(K)}, then the pair $(A,B)$ defined by~\eqref{eq:lang-mat} also satisfies the Kalman condition. By Theorem~5.1(2) there, it then immediately implies the dissipativity condition~\textnormal{(D)}. In Section~4.1 there, the case of the triangular network is treated and explicit sufficient conditions for the Kalman condition are given in terms of the spring constants.
		Again, the growth condition~\textnormal{(G)} is to be imposed on the gradient~$\nabla_q U$ of the pertrubing potential~$U$.

		As mentioned in the introduction, the recent work of Cuneo, Eckmann, Hairer and Rey-Bellet \cite{CEHRB} provides a result of existence, uniqueness and exponentially fast convergence in a similar setup. Their conditions C3--C5 on the behaviour of the potential at infinity are significantly less restrictive than our conditions~\textnormal{(D)} and~\textnormal{(G)}, allowing for strong anharmonicity. However, their nondegeneracy condition~C2 is needed in all points of the phase space while our H\"ormander condition~\textnormal{(H}) is only needed in one point. Their controllability condition~C1 on the topology of the graph plays a role similar to that of our Kalman condition~\textnormal{(K)}.

		\subsection{Coupling through additional degrees of freedom}\label{ssec:add-dof}
			As pointed out {e.g.} in~\cite{JPS17}, models where the noise acts through auxiliary degrees of freedom enjoy the same structural properties, and are thus also suitable for our framework. We refer the reader to~\cite{FKM65,Tr77,EPR99a} for discussions of the physical interpretation and derivation of such models. Because of these auxiliary degrees of freedom, the model is sometimes said to be semi-Markovian.

			Let~$I$ and~$J$ be finite sets as above and consider $X = (r, p, \omega q) \in \rr^{J} \oplus \rr^I \oplus \rr^I$ for some nonsingular linear map $\omega : \rr^I \to \rr^I$.
			In addition, let $\Lambda: \rr^{J} \to \rr^I$ be a linear injection and let $\iota : \rr^{J} \to \rr^{J}$ and $\vartheta : \rr^{J} \to \rr^{J}$ be linear bijections. We set
			\begin{equation}\label{eq:add-dof-mat}
				A = \left(\begin{matrix}
							-\frac{1}{2} \iota \iota^* & -\Lambda^* & 0 \\
							\Lambda & 0 & -\omega^* \\
							0 & \omega & 0
						\end{matrix}\right)
			\qquad\text{
			and
			}\qquad
				B = \left(\begin{matrix}
						\iota \\ 0 \\ 0
					\end{matrix}\right)
					\vartheta^{1/2};
			\end{equation}
			the important structural constraints are
			\begin{gather}
				\qquad \vartheta > 0, \qquad\qquad\qquad\qquad\qquad\quad
				B^* B > 0, \\
				\ker (A - A^*) \cap \ker B^* = \{0\}, \qquad\qquad
				A + A^* = -B \vartheta^{-1} B^*.
			\end{gather}
			The perturbation~$F$ is taken to be of the form
			$$
				F: X = (r,p,\omega q) \mapsto -\nabla_q U(q)
			$$
			for some smooth potential $U : \rr^I \to \rr$ encoding the anharmonic part of both the interaction and the pinning potential. More explicitly, the equations of motion then read
			\begin{align*}
				\d q &= p \d t, \\
				\d p &= - \omega^* \omega q \d t - \nabla_{q} U(q)\d t + \Lambda r(t) \d t, \\
				\d r  &= - \tfrac{1}{2} \iota^*\iota r \d t - \Lambda^*  p \d t - \iota \vartheta^{1/2} \d W_t.
			\end{align*}

			\begin{proposition}\label{prop:K-add}
				If the pair $(\omega^*\omega, \Lambda)$ satisfies the Kalman condition, then the pair $(A,B)$ also satisfies the Kalman condition~\textnormal{(K)}.
			\end{proposition}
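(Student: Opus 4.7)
The plan is to analyze directly how the subspaces $V_k := \sum_{j=0}^k \Ran(A^j B)$ grow with $k$ and to show that, once the Kalman condition holds for $(\omega^*\omega, \Lambda)$, they exhaust $\rr^J \oplus \rr^I \oplus \rr^I$ for some finite $k$. Since $\iota$ and $\vartheta^{1/2}$ are bijections of $\rr^J$, one immediately has $\Ran(B) = \rr^J \oplus 0 \oplus 0$, and right-multiplication of $B$ by the nonsingular factor $\vartheta^{1/2}$ plays no role in the span of $\{A^k B : k \geq 0\}$. I will then use the identity $V_k = V_0 + A V_{k-1}$ together with the block form
$$A(r,p,s) = \bigl(-\tfrac{1}{2}\iota\iota^* r - \Lambda^* p,\; \Lambda r - \omega^* s,\; \omega p\bigr),$$
which dictates how each block of $V_{k-1}$ feeds the blocks of $V_k$.

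Because $V_0$ is contained in every $V_k$, the first slot of each $V_k$ is all of $\rr^J$, and it is natural to seek a decoupled form $V_k = \rr^J \oplus U^p_k \oplus U^s_k$ with $U^p_k, U^s_k \subseteq \rr^I$. Reading off the block action of $A$ and using that $r_0$ ranges freely over $\rr^J$, one obtains the coupled linear recursion
$$U^p_k = \Lambda(\rr^J) + \omega^* U^s_{k-1}, \qquad U^s_k = \omega U^p_{k-1},$$
with $U^p_0 = U^s_0 = 0$, which iterates cleanly to
$$U^p_{2j} = \sum_{l=0}^{j-1}(\omega^*\omega)^l \Lambda(\rr^J), \qquad U^s_{2j} = \omega\, U^p_{2j}.$$
By the Kalman condition on $(\omega^*\omega, \Lambda)$ and the Cayley--Hamilton theorem, for some $j \leq \dim \rr^I$ the first sum fills $\rr^I$; nonsingularity of $\omega$ then promotes this to $U^s_{2j} = \omega \rr^I = \rr^I$ as well. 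I conclude $V_{2j} = \rr^J \oplus \rr^I \oplus \rr^I$, which is condition~\textnormal{(K)} for $(A,B)$.

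The step requiring the most care is justifying the decoupled form $V_k = \rr^J \oplus U^p_k \oplus U^s_k$ at each stage: a priori $A V_{k-1}$ couples the three slots through a single choice of $(r_0, p_0, s_0) \in V_{k-1}$, so one must check that, after adding $V_0 = \rr^J \oplus 0 \oplus 0$ back in, any desired pair $(u,v)$ with $u \in \Lambda(\rr^J) + \omega^* U^s_{k-1}$ and $v \in \omega U^p_{k-1}$ is independently realizable in the second and third slots. This will come from the inductive hypothesis that $V_{k-1}$ is itself a direct sum in these slots, combined with the freedom to use $r_0 \in \rr^J$ to hit any element of $\Lambda(\rr^J)$ in the second slot without affecting $\omega p_0$ in the third. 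Once this decoupling is secured, the remainder of the argument is the transparent linear iteration above and delivers a verbatim reduction of (K) for $(A,B)$ to (K) for $(\omega^*\omega, \Lambda)$.
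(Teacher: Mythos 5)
Your proof is correct, and it takes a genuinely different route from the paper's. The paper argues dynamically: it uses the equivalence of the Kalman condition with controllability of a linear system, takes an arbitrary target $(\hat r,\hat p,\omega\hat q)$, drives the reduced $(p,q)$--system to $(\hat p,\hat q)$ using a control provided by the Kalman condition on $(\omega^*\omega,\Lambda)$, and then builds, in two stages, explicit controls for the full $(r,p,q)$--system that hit the target approximately. You instead work algebraically with the reachable subspaces $V_k=\sum_{j=0}^k\Ran(A^jB)$ and the recursion $V_k=V_0+AV_{k-1}$. The key structural observation, which you correctly flag and justify, is that the decoupled form $V_k=\rr^J\oplus U^p_k\oplus U^s_k$ propagates inductively: because the second slot of $A(r,p,s)$ depends only on $(r,s)$ while the third depends only on $p$, and because the first slot can be absorbed freely by $V_0$, the three slots stay independent at every step, yielding the clean recursion $U^p_k=\Lambda(\rr^J)+\omega^*U^s_{k-1}$, $U^s_k=\omega U^p_{k-1}$. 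Iterating gives $U^p_{2j}=\sum_{l=0}^{j-1}(\omega^*\omega)^l\Lambda(\rr^J)$ and $U^s_{2j}=\omega U^p_{2j}$, so the Kalman condition on $(\omega^*\omega,\Lambda)$ together with Cayley--Hamilton and invertibility of $\omega$ forces $V_{2j}=\rr^J\oplus\rr^I\oplus\rr^I$ for $j\leq|I|$. Both proofs are sound; yours is shorter and more elementary in that it needs no appeal to the Kalman--controllability equivalence or to any construction of time-dependent controls, while the paper's control-theoretic route mirrors the way such reductions are typically argued in the networks-of-oscillators literature and makes the mechanism (auxiliary degrees of freedom relaying the input into the $(p,q)$--sector) physically transparent.
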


			\begin{proof}
				 Let $(\hat r, \hat p, \omega \hat q)$ be a target for the system in time $T > 0$. If $(\omega^*\omega,\Lambda)$ satisfies the Kalman condition, there exists $\eta_1 \in C^1_0([0,T];\rr^n)$ such that the solution $(p_1(t), q_1(t))$ of
				\begin{align*}
					\dot p_1 &= -\omega^* \omega q_1 + \Lambda \dot\eta_1, 	& p_1(0) &= 0,\\
					\dot q_1 &= p_1,			& q_1(0) &= 0,
				\end{align*}
				satisfies $(p_1(T), q_1(T)) = (\hat p, \hat q)$. Note that $(\tfrac tT \hat r,p_1(t), q_1(t))$ is then a solution of the system
				\begin{align*}
					\dot r_2 &= - \tfrac 12 \iota \iota^* (r_2 + \dot\eta_2) - \Lambda^* p_2 + \iota \vartheta^{1/2} \dot\zeta_2,  & r_2(0) &= 0,\\
					\dot p_2 &= -\omega^* \omega q_2 + \Lambda (r_2 + \dot\eta_2), & p_2(0) &= 0,\\
					\dot q_2 &= p_2 & q_2(0) &= 0,
				\end{align*}
				for the choices of control
				\begin{align*}
					\eta_2(t)  &= \eta_1(t) - \int_0^{t }\tfrac sT \hat r \d s, \\
					\zeta_2(t) &= (\iota \vartheta^{1/2})^{-1}\int_0^t \tfrac 1T \hat r + \tfrac 12 \iota \iota^* \dot\eta_1(s) + \Lambda^* p_1(s) \d s,
				\end{align*}
				hitting the prescribed target at time~$t = T$.

				Finally, note that with $\tilde\eta$ a smooth approximation of $\dot\eta_2$ that is $0$ at times $t=0$ and $t=T$, $(r_2(t) + \tilde\eta(t),p_2(t),q_2(t))$ is an approximate solution of
				\begin{align*}
					\dot r &= -\tfrac 12 \iota \iota^* r - \Lambda^* p + \iota \vartheta^{1/2} \dot\zeta, & r(0) &= 0,\\
					\dot p &= - \omega^* \omega q + \Lambda r , & p(0) &= 0,\\
					\dot q &= p, & q(0) &= 0,
				\end{align*}
				for the choice of control
				\begin{align*}
					\zeta(t) = \zeta_2(t) + (\iota \vartheta^{1/2})^{-1}\tilde\eta(t).
				\end{align*}
				Therefore, the original system is approximately controllable from 0. Because the system is linear, we conclude that the pair $(A,B)$ satisfies the Kalman condition.
			\end{proof}

			Then, Theorem~5.1(2) of~\cite{JPS17} states that, in this setup, the Kalman condition~\textnormal{(K)} implies that all the eigenvalues of~$A$ have strictly negative real part, {i.e.} condition~\textnormal{(D)}.

			In particular, for~$A$ and~$B$ arising from a pair~$(\omega^*\omega,\Lambda)$ satisfying the Kalman condition~\textnormal{(K)}, as long as $|\nabla_q U(q)| = O(1+|q|)^\power$ as $|q| \to \infty$, and as long as there exists a point where the weak H\"ormander condition holds,  the field $q \mapsto \omega^* \omega q +\nabla_q U(q)$ is allowed to be degenerate in nonnegligible regions of the position space. This is to be compared the nondegeneracy hypothesis~\textnormal{H2)} in~\cite{EPR99a,RBT02,Ca07} and \textnormal{C2} in~\cite{CEHRB} that are needed everywhere.

\section{The weak H\"ormander condition}\label{sec:Hor}

	As a starting point, we note that under the assumption~\textnormal{(K)}, the condition~$\textnormal{(H)}$ is automatically satisfied for any~$F$ with compact support or any~$F$ whose derivatives up to order $d - 1$ vanish at a point. Also note that a standard perturbative argument shows that if the conditions~\textnormal{(D)},~\textnormal{(K)} and~\textnormal{(G)} are satisfied, then there exists $\lambda_0 > 0$ such that the system
	$$
		\d X_t  = A X_t \d t + \lambda F(X_t)\d t + B \d W_t
	$$
	admits a unique invariant measure satisfying~\eqref{eq:exp-conv} as soon as $0 < \lambda < \lambda_0$.

	A more subtle perturbative argument is presented in Proposition~\ref{prop:pert-H}. We then give an example of a physically motivated potential to which this proposition applies in the context of networks of oscillators.

	In view of the definition of the weak H\"ormander condition, we are interested in the part of the tangent space spanned by Lie derivatives. The Lie derivatives $\L_G b$, $\L_G^2 b$, $\dotsc$, $\L_G^{d_*-1}b$ with $G: x \mapsto Ax +F(x)$ and $b$ a constant vector field will play a particularly important role. A direct computation shows
	\begin{align*}
		\L_G b &= -DG[b], \\
		\L_G^2 b
			&= +DG^2[b] - D^2G[b,G], \\
		\L_G^3 b &= -DG^3[b] + 2DG[D^2G[b,G]]  + D^2G[DG[b],G] - D^3G[b,G,G] - D^2G[b,DG[G]],
	\end{align*}
	and so forth. Here, the point of the space at which the vectors fields are taken is implicit and we use
   $$
      D^j G [{\cdot\,},{\cdot\,}, \dotsc,{\cdot\,}] : \underbrace{\rr^d \times \rr^d \times \dotsb \times \rr^d}_{j \text{ times}} \to \rr^d
   $$
   for the $j$th Fr\'echet derivative of the map $G : \rr^d \to \rr^d$ at this point. The above pattern generalises in the following way.

\medskip

\paragraph{Claim.} The difference between $\L_G^k b $ and $(-1)^k{DG}^k[b]$ is a linear combination over $\zz$ of compositions of Fr\'echet derivatives of~$G$ with~$b$. In each term, $b$ appears once, $G$ appears~$N_0$ times, $DG$ appears~$N_1$ times, $\dotsc$, $D^kG$ appears~$N_k$ times, with $N_1 \neq k$ and
	\begin{gather}\label{eq:k-sum-order}
		\sum_{j=0}^k  N_j =
		\sum_{j=0}^k j  N_j = k.
	\end{gather}

\begin{proof}
	We proceed by induction on~$k$. For $k = 1$ we have
	\begin{align*}
		\mathcal{L}_G b = -DG [b],
	\end{align*}
   which satisfies the claim. Assume now that the result holds for some $k \in \nn$ so that~$\mathcal{L}^k_G b - (-1)^k DG^k[b]$ is a sum of terms satisfying~\eqref{eq:k-sum-order}.
	Since
	$$
		\mathcal{L}^{k+1}_G b =  - DG [ \mathcal{L}^{k}_G b] + D(\mathcal{L}^{k}_G b)[G],
	$$
	the first term yields $-(-1)^k DG [DG^k [b]]$ and terms with the same form as those of $\mathcal{L}^{k}_G b$, but with the changes $k \mapsto k+1$ (adding $ N_{k+1} = 0$) and $ N_1 \mapsto  N_1 + 1$. It indeed satisfies the right condition on the $N$'s if $\mathcal{L}^{k}_G b$ does.
	As for the second term, by the product rule, each term in~$\mathcal{L}^{k}_G b$ yields a sum of terms undergoing $ N_0 \mapsto  N_0 + 1$ and
	$ N_j \mapsto  N_{j} - 1$ and $ N_{j+1} \mapsto  N_{j+1} + 1$ for one and only one~$j \in \{1, \dotsc, k\}$.
\end{proof}

\begin{proposition}\label{prop:pert-H}
	Suppose that the pair $(A,B)$ satisfies the Kalman condition~\textnormal{(K)} and that there exists a sequence $(y^{(n)})_{n\in\nn}$ in~$\rr^d$ that is bounded away from 0 and such that
	$$
		\lim_{n \to \infty} |y^{(n)}|^{k-1} \|D^k F(y^{(n)})\|  = 0
	$$
	for each $k = 1, 2, \dotsc, d_* - 1$. Then, there exists a point~$x_0 \in \rr^d$ where the weak H\"ormander condition~\textnormal{(H)} is satisfied.
\end{proposition}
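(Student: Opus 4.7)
The plan is to choose $x_0 = y^{(n_0)}$ for some sufficiently large $n_0$ and to show that the already restrictive subfamily $\{\mathcal{L}_G^k Be_i : 0 \leq k \leq d_*-1,\ 1 \leq i \leq n\}$ of the family appearing in~\textnormal{(H)}, where $G(x) := Ax + F(x)$, spans~$\rr^d$ at that point. By the Kalman condition~\textnormal{(K)}, the $d_*n$ linear vectors $(-1)^k A^k Be_i$ span~$\rr^d$; since full rank is an open condition on the stacked matrix, it therefore suffices to prove
\[
    \mathcal{L}_G^k Be_i\,(y^{(n)}) \xrightarrow[n\to\infty]{} (-1)^k A^k Be_i
\]
for each $k \in \{0,\dots,d_*-1\}$ and $i \in \{1,\dots,n\}$.

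For this, I would invoke the Claim above to split
\[
  \mathcal{L}_G^k b = (-1)^k (DG)^k[b] + R_k,
\]
where $R_k$ is a finite $\zz$-combination of compositions of Fr\'echet derivatives of $G$ applied to $b$, each with exponents $(N_0,\ldots,N_k)$ satisfying $\sum_j N_j = \sum_j j N_j = k$ and $N_1 < k$. Subtracting the two sum relations yields the identity $N_0 = \sum_{j \geq 2}(j-1)N_j$, which in particular forces every term in $R_k$ to contain at least one factor $D^j G = D^j F$ with $j \geq 2$. The pure part evaluates at $y^{(n)}$ as $(A + DF(y^{(n)}))^k b$, which tends to $A^k b$ because $\|DF(y^{(n)})\| \to 0$ by the $k=1$ case of the hypothesis.

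It remains to bound $R_k(y^{(n)})$. Writing $\|D^j F(y^{(n)})\| = \varepsilon_j(n)\,|y^{(n)}|^{1-j}$ with $\varepsilon_j(n) \to 0$ for $2 \leq j \leq d_*-1$, using that $F$ is globally Lipschitz so $|G(y^{(n)})| = O(|y^{(n)}|)$, and using $\|DG(y^{(n)})\| = O(1)$, one obtains for each term of $R_k$ a bound of the form
\[
  C\,|y^{(n)}|^{N_0 + \sum_{j \geq 2}(1-j)N_j}\,\prod_{j\geq 2}\varepsilon_j(n)^{N_j}.
\]
By the identity above, the exponent of $|y^{(n)}|$ is exactly zero, and since $\sum_{j \geq 2} N_j \geq 1$ in every correction term, the $\varepsilon_j(n)$-product is $o(1)$; hence $R_k(y^{(n)}) \to 0$. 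Picking $n_0$ large enough that a $d \times d$ minor of the perturbed stacked matrix has nonvanishing determinant completes the argument.

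The main obstacle I expect is precisely this bookkeeping: both constraints $\sum N_j = k$ and $\sum j N_j = k$ from the Claim are needed to make the powers of $|y^{(n)}|$ cancel exactly, and the restriction $N_1 < k$ is essential to guarantee that every correction term contains at least one factor $D^j F$ with $j \geq 2$ able to provide the $o(1)$ decay. Once the estimate on $R_k$ is in place, the conclusion is a routine continuity/openness argument combined with~\textnormal{(K)}.
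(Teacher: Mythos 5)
Your proposal is correct and follows essentially the same route as the paper: split $\mathcal{L}_G^k b = (-1)^k(DG)^k[b] + R_k$ via the Claim, use the two exponent constraints (whose difference gives $N_0 = \sum_{j\geq 2}(j-1)N_j$) to make the powers of $|y^{(n)}|$ cancel exactly, observe that $N_1 \neq k$ forces at least one factor $D^jF$ with $j\geq 2$ to carry the $o(1)$ decay, and combine this with $\|DF(y^{(n)})\|\to 0$ to get $\mathcal{L}_G^k Be_i(y^{(n)})\to(-1)^k A^k Be_i$, whence the conclusion by openness of full rank and the Kalman condition. You make the bookkeeping identity $N_0=\sum_{j\geq 2}(j-1)N_j$ and the ``at least one $j\geq 2$ factor'' observation slightly more explicit than the paper does, and you merge into a single limit what the paper splits into an asymptotic span equality followed by a separate convergence $(DG(y^{(n)}))^k b \to A^k b$; these are cosmetic differences, not a different argument.
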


\begin{proof}
	Let $G$ denote $y \mapsto A y + F(y)$ and let $b$ stand for a column of~$B$. By our previous claim, we have the bound
	\begin{align*}
		&|(\mathcal{L}^k_G b)(y) - (-1)^k(DG(y))^k[b]| \\
		&\qquad \qquad \leq \sum_{N \in \mathcal{A}} |C_N| |b||G(y)|^{ N_0} \|D G(y)\|^{ N_1} \|D^2G(y)\|^{ N_2} \dotsb \|D^k G(y)\|^{ N_k} \\
		&\qquad \qquad \leq
			\sum_{N \in \mathcal{A}} |C_N| |b|(\|A\||y| + \tfrac 18 \|M\|^{-1}|y| + c_1)^{ N_0}   \|D G(y)\|^{ N_1} \|D^2G(y)\|^{ N_2} \dotsb \|D^k G(y)\|^{ N_k}
	\end{align*}
	where
	$
		\mathcal{A} := \{ N = ( N_0,  N_1, \dotsc,  N_k) \in (\nn \cup \{0\})^k \text{ satisfying~\eqref{eq:k-sum-order} and } N_1 \neq k\}
	$
	and $C_N$ is a combinatorial factor in~$\zz$.

 	By condition~\eqref{eq:k-sum-order},
	\begin{align*}
		|y|^{ N_0} \|D G(y)\|^{ N_1} \|D^2G(y)\|^{ N_2} \dotsb \|D^k G(y)\|^{ N_k}
			&= |y|^{\sum_{j' = 2}^k (j'-1) N_{j'}} \prod_{j=1}^{k} \|D^{j} G(y)\|^{ N_{j}}  \\
			&= \|DG(y)\|^{ N_1} \prod_{j=2}^{k} |y|^{(j-1) N_j}\|D^jG(y)\|^{ N_j}.
	\end{align*}
	Along the subsequence~$(y^{(n)})_{n \in \nn}$ in the hypothesis, for each $j \geq 2$,
	$$
		\lim_{n \to \infty} |y^{(n)}|^{j-1} \|D^j G(y^{(n)})\| =	\lim_{n \to \infty} |y^{(n)}|^{j-1} \|D^j F(y^{(n)})\|  = 0.
	$$
	In the case $j=1$, we have
	$$
		\limsup_{n \to \infty} \|D G(y^{(n)})\| \leq \|A\| + \limsup_{n\to\infty}\|D F(y^{(n)})\| = \|A\|.
	$$
	Therefore,
	$$\lim_{n \to \infty} |(\mathcal{L}^k_G b)(y^{(n)}) - (-1)^k(DG(y^{(n)}))^k[b]| = 0$$
	for $k = 1, 2, \dotsc, d_*-1$, and for $n$ large enough,
	$$
		\operatorname{span} \{ b, \mathcal{L}_G b, \dotsc, \mathcal{L}_G^{d_*-1} b \}_{y = y^{(n)}} = \operatorname{span} \{ b, DG(y) b, \dotsc, (DG(y))^{d_* - 1}b \}_{y = y^{(n)}}.
	$$
	Finally note that
	$$
		\lim_{n \to \infty}\|(DG(y^{(n)}))^k b - A^k b\| \leq \limsup_{n \to \infty} \sum_{j=1}^{k} \binom{k}{j} \|A\|^{k-j} \|DF(y^{(n)})\|^{j} = 0.
	$$
	We conclude from the Kalman condition that for $N \in \nn$ large enough
	$$
		 \operatorname{span} \{ b, DG(y) b, \dotsc, (DG(y))^{d_* - 1}b : b \in \operatorname{ran} B \}_{y = y^{(N)}}
	$$
	coincides with
	$$
		\operatorname{ran} \{ B, A B, \dotsc, A^{d_* - 1}B \} = \rr^d.
	$$
   The results holds with $x_0 = y^{(N)}$.
\end{proof}

\begin{example}\label{ex:coul}
 	Consider that the masses in the models of Section~\ref{sec:net}, although restricted to a single spatial degree of freedom, live in 3-dimensional space and each hold an electric charge of Gaussian density
	$$
		\rho_i({\,\cdot\,}) = \frac{ Q}{(2\pi)^{3/2}\sigma^3}\exp\Big( - \frac{|{\,\cdot\,} - (q_i + q_i^\textnormal{eq})|^2}{2\sigma^2}\Big)
	$$
	where $\sigma$ is a parameter with dimension of length and $Q$ is the electric charge of each mass. In view of Poisson's equation in~$\rr^3$, this gives rise to the term
	$$
		U(q) = \sum_{i \in I} \sum_{\substack{i' \in I \\ i' \neq i}} \frac{Q^2}{4\pi\epsilon_0|(q_i + q_i^{\textnormal{eq}})-(q_{i'} + q_{i'}^{\textnormal{eq}})|} \frac{2}{\sqrt{\pi}} \int_0^{\frac{|(q_i + q_i^{\textnormal{eq}})-(q_{i'} + q_{i'}^{\textnormal{eq}})|}{\sqrt 2 \sigma}} \Exp{- s^2} \d s
	$$
	in the Hamiltonian.
	This potential satisfies the condition of the previous proposition: take for example a sequence with $q^{(n)}_i = i n \sigma$.

	For the sake of matching exactly the setup of~\cite{EPR99a,RBT02,Ca07}, consider that $I = \{1,\dotsc,L\}$ and $J = \{1,L\}$ and that only nearest neighbours interact through the Coulomb force. Let us use the shorthand $\tilde q_i := q_i + q_i^\textnormal{eq}$. Then, the corresponding perturbing potential
	$$
		U^\textnormal{n.n.}(q) = \sum_{i =1}^{L-1} \frac{Q^2}{4\pi\epsilon_0|\tilde q_i- \tilde q_{i+1}|} \frac{2}{\sqrt{\pi}} \int_0^{\frac{|\tilde q_i - \tilde q_{i+1}|}{\sqrt 2 \sigma}} \Exp{- s^2} \d s
	$$
	also satisfies the hypotheses of our previous proposition. However, note that
	{\small
	$$
		\partial_{q_2}\partial_{q_3} U^\textnormal{n.n.}(q) = \frac{Q^2}{4\pi^{\frac{3}{2}}\epsilon_0\sigma^3} \Bigg( -\frac{4 \int_0^{\frac{|\tilde q_2-\tilde q_3|}{\sqrt 2 \sigma}} \Exp{- s^2} \d s}{|\tilde q_2-\tilde q_3|^3/\sigma^3}+\frac{2 \sqrt{2} e^{-\frac{|\tilde q_2-\tilde q_3|^2}{2 \sigma ^2}}}{|\tilde q_2-\tilde q_3|^2/\sigma^2} + {\sqrt{{2}} \e^{-\frac{|\tilde q_2-\tilde q_3|^2}{2 \sigma ^2}}} \Bigg)
	$$
	}%
	does not have a definite sign. Hence, for large values of~$Q^2 \sigma^{-3}$ (very concentrated charge distribution), the uniform condition~\textnormal{H2)} in~\cite{EPR99a,EPR99b,RBT02,Ca07} is not satisfied.
\end{example}

\appendix

\section{Decomposability properties}
\label{sec:app}
We devote this appendix to the decomposability properties of~$\ell$ in Remark~\ref{rem:dec}. We consider the case $T=1$ and $n=1$ but the argument can be easily adapted to the general case. Although we use results from the theory of Gaussian measures to show the decomposability properties, these properties are not specific to Gaussian processes and can be proved for other types of noises.

The Wiener process restricted to the interval~$[0,1]$ is a nondegenerate Gaussian measure on the Banach space $C_0([0,1];\rr)$. It has as its Cameron--Martin space the space $W^{1,2}_0([0,1];\rr)$ equipped with the inner product $$\braket{\eta,\zeta}_{W^{1,2}_0} = \int_0^1 \dot\eta(s)\dot\zeta(s) \d s.$$ This Hilbert space
 has orthonormal basis~$\{\psi_m\}_{m \in \nn}$ where
\[
	\psi_m(t) = \int_0^t \phi_m(s) \d s
\]
and where $\{\phi_m\}_{m \in\nn}$ is a Fourier basis for $L^2([0,1];\rr)$. It is dense as a subspace of~$C_0([0,1];\rr)$ equipped with the supremum norm.

Let
\(
	F_N := \operatorname{span}\{\psi_m : m \leq N\}
\)
and let~$F'_N$ be the closure in $C_0([0,1];\rr)$ of the linear span of $\{\psi_m : m > N\}$.
These sequences of subspaces satisfy~(i) and provide a decomposition $F_N \oplus F'_N$: any $\eta \in C_0([0,1];\rr)$ can be written in a unique way as $\eta_N + \eta'_N$ with $\eta_N \in F_N$ and $\eta'_N \in F'_N$. To this decomposition are associated the projectors $\Pi_N$ and $\Pi'_N$.

By the general theory of Gaussian measures~(see {e.g.}~\cite[\S{3.5}]{Bog}), Brownian motion can be represented as the almost surely convergent sum
\[
	W_t(\omega) = \sum_{m \leq N} \Xi_m(\omega)\psi_m(t) + \sum_{m > N} \Xi_m(\omega)\psi_m(t),
\]
where $(\Xi_m)_{m\in\nn}$ is a sequence of independent scalar standard normal random variables. The two sums are independent and provide the decomposition~(ii) of~$\ell$ as the product of the projected laws. Property~(iii) clearly holds.

These abstract results from the theory of Gaussian measures do not provide strong convergence of~$\Pi_N$ to the identity operator on the Banach space~$C_0([0,1];\rr)$ as $N \to \infty$ (or boundedness of the set of norms $\{\|\Pi_N\| : N \in \nn\}$, which is used in~\cite{Sh17}). However, we have the following weaker convergence result for regular enough sets of functions.

\begin{lemma}\label{lem:not-quite-bounded}
	If~$Q$ is a subset of $C_0([0,1];\rr)$ that is bounded in the norm induced by the inner product~$\braket{\,\cdot\,,\cdot\,}_{W^{1,2}_0}$, then
	\[
	\lim_{N \to\infty} \sup_{\eta \in Q} \|\eta - \Pi_N \eta \|_{C_0} = 0.\]
\end{lemma}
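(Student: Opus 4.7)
The plan is to produce a uniform-in-$\eta$ bound on $\|\eta - \Pi_N\eta\|_{C_0}$ driven by an estimate on the tail $\sum_{m > N}|\psi_m(t)|^2$, which I will control via Parseval's identity and Dini's theorem on the compact interval $[0,1]$.

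Since the hypothesis implicitly places each $\eta \in Q$ in $W^{1,2}_0([0,1];\rr)$, the map $\Pi_N$ acts on~$\eta$ as the $W^{1,2}_0$-orthogonal projection onto $F_N$, with coefficients $\braket{\eta,\psi_m}_{W^{1,2}_0} = \braket{\dot\eta,\phi_m}_{L^2}$. I would evaluate
$$
(\eta - \Pi_N\eta)(t) = \sum_{m > N}\braket{\dot\eta,\phi_m}_{L^2}\,\psi_m(t)
$$
at a fixed $t \in [0,1]$. The key observation is that $\psi_m(t) = \braket{\one_{[0,t]},\phi_m}_{L^2}$. Once the quantity $R_N(t) := \sum_{m > N}|\psi_m(t)|^2$ is seen to be finite, the series $\sum_{m > N}\psi_m(t)\phi_m$ converges in $L^2([0,1];\rr)$, and Cauchy--Schwarz yields the pointwise estimate
$$
|(\eta - \Pi_N\eta)(t)| \leq \|\dot\eta\|_{L^2}\sqrt{R_N(t)}.
$$

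The remaining task is to upgrade the convergence $R_N(t) \to 0$ to a uniform-in-$t$ statement. By Parseval's identity applied to $\one_{[0,t]} \in L^2([0,1];\rr)$,
$
\sum_{m \in \nn}|\psi_m(t)|^2 = \|\one_{[0,t]}\|_{L^2}^2 = t,
$
so $R_N(t) = t - \sum_{m \leq N}|\psi_m(t)|^2$ is continuous in $t$. The sequence $(R_N)_{N \in \nn}$ is monotone decreasing in $N$ and converges pointwise to $0$, so Dini's theorem gives $\sup_{t \in [0,1]} R_N(t) \to 0$. Combining with the previous bound,
$$
\sup_{\eta \in Q}\|\eta - \Pi_N \eta\|_{C_0} \leq \Bigl(\sup_{\eta \in Q}\|\dot\eta\|_{L^2}\Bigr)\sqrt{\sup_{t \in [0,1]} R_N(t)} \longrightarrow 0,
$$
and the first factor is finite precisely by the $W^{1,2}_0$-boundedness of~$Q$.

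The main obstacle I anticipate is conceptual rather than computational: one cannot argue by a naive density strategy of the form ``$\Pi_N$ uniformly bounded on $C_0$ plus $\eta - \Pi_N\eta \to 0$ on a dense subspace'', because $\Pi_N$ behaves essentially like a Fourier partial-sum projection and its $C_0 \to C_0$ operator norms need not be bounded. The right idea is to exploit the $W^{1,2}_0$-orthogonality directly through the reproducing-kernel-style tail $R_N(t)$; once this is in place, the remainder of the proof reduces to Parseval, Cauchy--Schwarz, and Dini's theorem.
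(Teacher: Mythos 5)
Your proof is correct, and it takes a genuinely different route from the paper's. The paper's argument uses the same Cauchy--Schwarz step, but then controls the tail by the crude bound
\[
\|\eta - \Pi_N\eta\|_{C_0} \leq \Big(\sum_{m>N}\|\psi_m\|_{C_0}^2\Big)^{1/2}\|\eta\|_{W^{1,2}_0},
\]
which requires the extra observation that $\sum_{m\in\nn}\|\psi_m\|_{C_0}^2 < \infty$. This is an explicit decay estimate on the sup norms of the antiderivatives of the chosen Fourier basis, so the paper's proof is tied to the specific basis and relies on verifying that $\|\psi_m\|_{C_0} = O(1/m)$. You instead keep the pointwise tail $R_N(t) = \sum_{m>N}|\psi_m(t)|^2$ and observe, via $\psi_m(t) = \braket{\one_{[0,t]},\phi_m}_{L^2}$ and Parseval, that $\sum_m|\psi_m(t)|^2 = t$ for every $t$ and \emph{every} orthonormal basis $\{\phi_m\}$ of $L^2$; Dini then upgrades the monotone pointwise decay $R_N(t)\downarrow 0$ to a uniform one. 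Your bound $\|\eta - \Pi_N\eta\|_{C_0} \leq \|\eta\|_{W^{1,2}_0}(\sup_t R_N(t))^{1/2}$ is sharper than the paper's (since $\sup_t R_N(t)\leq\sum_{m>N}\|\psi_m\|_{C_0}^2$), and the argument is basis-independent, at the cost of invoking Dini's theorem rather than a direct series estimate. Both are short and correct; yours is the more robust formulation, and your remark about why a naive uniform-boundedness/density argument fails is apt, since the $\Pi_N$ are not uniformly bounded on $C_0$.
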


\begin{proof}
	First note that by construction of the basis,
	\[
		\sum_{m \in \nn} \|\psi_m\|_{C_0}^2 <\infty.
	\]
	For $\eta \in W^{1,2}_0([0,1];\rr)$, the decomposition into the two subspaces can be made explicit:
	\[
		\eta(t) = \sum_{m \leq N} \psi_m(t) \int_0^1 \phi_m(s)\dot\eta(s)\d s + \sum_{m > N} \psi_m(t) \int_0^1 \phi_m(s)\dot\eta(s)\d s
	\]
	and by the Cauchy--Schwarz inequality
	\begin{align*}
		\|\eta - \Pi_N \eta\|_{C_0}
			&\leq \Big(\sum_{m > N} \|\psi_m\|_{C_0}^2\Big)^{\frac 12}\Big(\sum_{m > N} \Big|\int_0^1 \phi_m(s)\dot\eta(s)\d s\Big|^2\Big)^{\frac 12} \\
			&\leq  \Big(\sum_{m > N} \|\psi_m\|_{C_0}^2\Big)^{\frac 12} \|\eta\|_{W^{1,2}_0}.
	\end{align*}
	The convergence thus follows from the hypothesis $\sup_{\eta \in Q} \| \eta \|_{W^{1,2}_0} < \infty$.
\end{proof}

\bibliographystyle{amsalpha}
\bibliography{sde-harm}

\end{document}